\newdimen\unit\newdimen\psep\newcount\nd\newcount\ndx\newbox\dotb\newbox\ptbox
\newdimen\dx\newdimen\dy\newdimen\dxx\newdimen\dyy\newdimen\hgt
\newdimen\xoff\newdimen\yoff
\newcommand\clap[1]{\hbox to 0pt{\hss{#1}\hss}}
\newcommand\vdisk[1]{{\font\dotf=cmr10 scaled #1\dotf.}}
\newcommand\varline[2]{\setbox\dotb\hbox{\vdisk{#1}}\xoff=-.5\wd\dotb
\wd\dotb=0pt\yoff=-.5\ht\dotb\psep=#2\ht\dotb}
\newcommand\varpt[1]{\setbox\ptbox\clap{\vdisk{#1}}\setbox\ptbox
\hbox{\raise-.5\ht\ptbox\box\ptbox}}
\newcommand\cpt{\copy\ptbox}
\newcommand\point[3]{\rlap{\kern#1\unit\raise#2\unit\hbox{#3}}}
\newcommand\setnd[4]{\dx=#3\unit\advance\dx-#1\unit\divide\dx by\psep
\dy=#4\unit\advance\dy-#2\unit\divide\dy by\psep \multiply\dx
by\dx\multiply\dy by\dy\advance\dx\dy\nd=1\advance\dx-1sp
\loop\ifnum\dx>0\advance\dx-\nd sp\advance\nd1\advance\dx-\nd
sp\repeat}
\newcommand\dline[5]{{\nd=#5\hgt=#2\unit\dx=#3\unit\advance\dx-#1\unit
\divide\dx by\nd\dy=#4\unit\advance\dy-#2\unit\divide\dy by\nd
\advance\hgt\yoff\rlap{\kern#1\unit\kern\xoff\loop\ifnum\nd>1\advance\nd-1
\advance\hgt\dy\kern\dx\raise\hgt\copy\dotb\repeat}}}
\newcommand\qellip[4]{{\setnd{0}{0}{#3}{#4}\dx=\unit\dy=0pt\raise\yoff\rlap{%
\kern#1\unit\kern\xoff\raise#2\unit\hbox{\loop\ifnum\dx>0\rlap{\kern#3\dx
\raise#4\dy\copy\dotb}\hgt=\dx\divide\hgt
by\nd\advance\dy\hgt\hgt=\dy \divide\hgt
by\nd\advance\dx-\hgt\repeat\rlap{\raise#4\dy\copy\dotb}}}}}
\newcommand\bez[6]{{\setnd{#1}{#2}{#3}{#4}\ndx=\nd\setnd{#3}{#4}{#5}{#6}
\ifnum\ndx>\nd\nd=\ndx\fi\dx=#3\unit\advance\dx-#1\unit\dy=#4\unit
\advance\dy-#2\unit\dxx=#5\unit\advance\dxx-#1\unit\dyy=#6\unit\advance
\dyy-#2\unit\advance\dxx-2\dx\advance\dyy-2\dy\divide\dxx
by\nd\divide\dyy by\nd\advance\dx.25\dxx\advance\dy.25\dyy\divide\dx
by\nd\divide\dy by\nd \multiply\nd
by2\dx=100\dx\dy=100\dy\dxx=100\dxx\dyy=100\dyy\divide\dxx by\nd
\divide\dyy by\nd\hgt=#2\unit\raise\yoff\rlap{\kern#1\unit\kern\xoff
\raise\hgt\copy\dotb\loop\ifnum\nd>0\advance\nd-1\advance\hgt0.01\dy
\kern0.01\dx\raise\hgt\copy\dotb\advance\dx\dxx\advance\dy\dyy\repeat}}}
\newcommand\ptu[3]{\point{#1}{#2}{\cpt\raise1ex\clap{$\scriptstyle{#3}$}}}
\newcommand\ptd[3]{\point{#1}{#2}{\cpt\raise-1.8ex\clap{$\scriptstyle{#3}$}}}
\newcommand\ptr[3]{\point{#1}{#2}{\cpt\raise-.4ex\rlap{$\ \scriptstyle{#3}$}}}
\newcommand\ptl[3]{\point{#1}{#2}{\cpt\raise-.4ex\llap{$\scriptstyle{#3}\ $}}}
\newcommand\ptlu[3]{\point{#1}{#2}{\raise.8ex\clap{$\scriptstyle{#3}$}}}
\newcommand\ptld[3]{\point{#1}{#2}{\raise-1.6ex\clap{$\scriptstyle{#3}$}}}
\newcommand\ptlr[3]{\point{#1}{#2}{\raise-.4ex\rlap{$\,\scriptstyle{#3}$}}}
\newcommand\ptll[3]{\point{#1}{#2}{\raise-.4ex\llap{$\scriptstyle{#3}\,$}}}
\newcommand\thnline{\varline{400}{.6}}
\newtheorem{thm}{Theorem}
\newtheorem{conj}{Conjecture}
\newtheorem{lemma}[thm]{Lemma}
\newtheorem{obs}[thm]{Observation}
\theoremstyle{definition}
\theoremstyle{definition}\newtheorem*{defn}{Definition}
\theoremstyle{definition}
\newcommand{\ds}{\displaystyle}
\newcommand{\ul}{\underline}
\def\E{\mathcal{E}}
\def\P{\mathcal{P}}
\def\Q{\mathcal{Q}}
\def\N{\mathbb{N}}
\def\Z{\mathbb{Z}}
\def\Pr{\mathbb{P}}
\def\le{\leqslant}
\def\ge{\geqslant}
\def\eps{\varepsilon}
\begin{document}
\title{Zero-temperature Glauber dynamics on $\Z^d$}

\author{Robert Morris}
\address{Murray Edwards College, The University of Cambridge, Cambridge CB3 0DF, England (Work partly done whilst at the Instituto Nacional de Matem\'atica Pura e Aplicada, Rio de Janeiro, Brazil)} \email{rdm30@cam.ac.uk}
\thanks{The author was supported during this research by MCT grant PCI EV-8C}

\begin{abstract}
We study zero-temperature Glauber dynamics on $\Z^d$, which is a dynamic version of the Ising model of ferromagnetism. Spins are initially chosen according to a Bernoulli distribution with density $p$, and then the states are continuously (and randomly) updated according to the majority rule. This corresponds to the sudden quenching of a ferromagnetic system at high temperature with an external field, to one at zero temperature with no external field. Define $p_c(\Z^d)$ to be the infimum over $p$ such that the system fixates at `$+$' with probability $1$. It is a folklore conjecture that $p_c(\Z^d) = 1/2$ for every $2 \le d \in \N$. We prove that $p_c(\Z^d) \to 1/2$ as $d \to \infty$.
\end{abstract}

\maketitle

\section{Introduction}\label{intro}

Perhaps the most extensively studied model in the statistical physics literature is the Ising model of ferromagnetism on $\Z^d$. Despite this, very little has been proved rigorously about the dynamics of the model, even when the temperature is zero. In particular, it is conjectured that the critical threshold $p_c(\Z^d)$ for fixation at the Gibbs state is equal to $1/2$ in all dimensions, but the best known upper bound, due to Fontes, Schonmann and Sidoravicius~\cite{FSS}, is only $p_c(\Z^d) < 1$. In this article we shall prove that this conjecture holds asymptotically as $d \to \infty$.

We begin with a precise definition of the question being investigated. Let $G$ be a (finite or infinite) graph, and endow each vertex $x \in V(G)$ with a spin $\sigma(x) \in \{+,-\}$, and an independent random exponential clock $C(x)$ (so the probability the clock does not ring in time $[s,s+t]$ is $e^{-t}$). We shall investigate \emph{zero-temperature Glauber dynamics} on $G$, which is the following dynamic process: For each vertex $x \in V(G)$ and each time $t \ge 0$, if the clock $C(x)$ does not ring at time $t$, then the state $\sigma(x)$ remains unchanged; if $C(x)$ \emph{does} ring at time $t$, then $\sigma(x)$ changes to agree with the majority of the neighbours of $x$ in $G$. (If there are an equal number of neighbours in each state, then the new state is chosen uniformly at random.) Our question is the following: Given a probability distribution on the state $(\sigma(x) : x \in V(G)) \in \{+,-\}^{V(G)}$ at time $t = 0$, what happens to the distribution of states as $t \to \infty$? In particular, under what conditions do all vertices end up in the same state?

We shall be interested in the above question when $G = \Z^d$, the $d$-dimensional square lattice, and when the states at time $0$ are chosen according to the Bernoulli distribution. More precisely, let $p \in (0,1)$, and suppose the spins $\sigma(x)$ at time $t = 0$ are chosen independently at random, with $\Pr\big( \sigma(x) \textup{ is `$+$'} \big) = p$ for each $x \in \Z^d$. We say that $\Z^d$ \emph{fixates at} `$+$' if, for each vertex $x \in \Z^d$, there is a time $T(x) \in [0,\infty)$ such that $\sigma(x)$ is `$+$' for all times $t \ge T(x)$. Note that if the system does not fixate then (in general) it is possible to have a mixture of vertices which are eventually `$+$', vertices which are eventually `$-$', and vertices which change state an infinite number of times.

Define
$$p_c(\Z^d) \; := \; \inf\big\{ p \,:\, \Pr\big(\Z^d \textup{ fixates at `$+$'} \big) = 1 \big\}.$$
The case $d = 1$ of this problem was first investigated by Erd\H{o}s and Ney~\cite{EN}, who studied the following, slightly simpler problem. Place a particle on each vertex of $\Z$ except the origin, allow each to perform a (possibly biased, discrete time) random walk on $\Z$, and annihilate any pair of particles which cross paths. They conjectured that, with probability 1, the origin is at some point occupied; in our problem this corresponds to the origin changing state at least once. The conjecture of Erd\H{o}s and Ney was proved by Lootgieter~\cite{Loot} and by Schwartz~\cite{Schw} in discrete and continuous time, respectively. Arratia~\cite{Arr} proved a much stronger result: that, for a wide class of random starting configurations, every site is occupied an infinite number of times. It follows easily from Arratia's theorem that, for any $p \in (0,1)$, in Glauber dynamics on $\Z$ every site changes state an infinite number of times, and hence that $p_c(\Z) = 1$.

For $d \ge 2$ the behaviour of the system is expected to be very different. The following conjecture is folklore.

\begin{conj}[Folklore]\label{folklore}
$$p_c(\Z^d) \; = \; \frac{1}{2}$$
for every $2 \le d \in \N$.
\end{conj}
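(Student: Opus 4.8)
The plan is to separate the claimed equality into the two bounds $p_c(\Z^d) \ge 1/2$ and $p_c(\Z^d) \le 1/2$; I expect the first to be a soft symmetry argument and the second to be the genuine content. (As worded the statement is still open: only the upper bound is at issue, and for fixed small $d$ it is exactly what remains unproved — this paper establishes the weaker asymptotic $p_c(\Z^d)\to 1/2$.)

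For the \emph{lower bound} I would combine three standard ingredients. First, zero-temperature Glauber dynamics admits a monotone coupling: running two copies from pointwise-ordered initial configurations $\sigma \le \sigma'$ with the same exponential clocks, and coupling the fair coins used to resolve ties (the only delicate point, and only relevant when $d$ is even), one keeps $\sigma_t \le \sigma'_t$ for all $t$. Hence $p \mapsto \Pr_p\big(\Z^d \text{ fixates at } +\big)$ is non-decreasing, so $p_c$ is a genuine threshold and fixation at $+$ holds almost surely for every $p > p_c$. Second, for each fixed $p$ the events ``fixates at $+$'' and ``fixates at $-$'' are translation-invariant measurable functions of the i.i.d.\ initial spins and clocks, hence each has probability $0$ or $1$; and they are plainly disjoint, so they cannot both have probability $1$. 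Third, the global spin flip together with $p \leftrightarrow 1-p$ is measure-preserving for the whole process, so $\Pr_{1-p}\big(\text{fixate at } -\big) = \Pr_p\big(\text{fixate at } +\big)$. Now if $p_c < 1/2$, choose $p \in (p_c, 1/2)$; then $1-p > 1/2 > p_c$, so at density $1-p$ the system fixates at $+$ almost surely (monotonicity) and also fixates at $-$ almost surely (the flip, using $p > p_c$), contradicting disjointness. Thus $p_c(\Z^d) \ge 1/2$. The same $0$--$1$ law incidentally shows $p_c$ is not attained: at $p = 1/2$ one expects, and for $d = 2$ it is known, that the origin flips infinitely often, so $\Pr_{1/2}\big(\text{fixate at } +\big) = 0$.

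For the \emph{upper bound} — the hard direction — I would aim to show that for every $p > 1/2$ the system fixates at $+$ almost surely; by the $0$--$1$ law it suffices to prove the origin is eventually $+$. The natural route is a multi-scale ``the $+$ phase wins'' argument: first prove a stability/growth lemma, that a sufficiently fat block of $+$ spins with a suitable $+$ boundary collar can never be invaded and in fact spreads (the majority rule makes large $+$ droplets absorbing while $-$ droplets shrink); then show that at density $p > 1/2$ some large block scale $L$ already carries a ``solidly $+$'' skeleton with very high probability (a bootstrap-percolation-type counting estimate); then run an induction over scales $L_1 \ll L_2 \ll \cdots$ in which control at scale $L_k$ is upgraded to control at scale $L_{k+1}$ before the $-$ spins can do damage, propagating $+$ across block interfaces by the stability lemma. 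Closing this induction for a fixed $d$ with $p$ arbitrarily close to $1/2$ would yield $p_c(\Z^d) = 1/2$.

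The main obstacle is precisely this induction in low dimensions. The present paper shows it can be pushed to $p = 1/2 + o_d(1)$ when $d$ is large, because there the combinatorial ``surface'' factors are favourable; for small $d$ — the decisive case being $d = 2$ — static percolation input does not suffice, since a $-$ droplet on $\Z^2$ evolves, to leading order, by an anisotropic curvature-type flow, and proving almost surely, in the presence of the quenched i.i.d.\ initial noise and the random clocks, that no infinite or perpetually regenerating $-$ structure survives for $p$ only slightly above $1/2$ is beyond current techniques (for context, even the far weaker bound $p_c(\Z^2) < 1$ of Fontes--Schonmann--Sidoravicius~\cite{FSS} is subtle). I therefore expect a full proof of the conjecture in low dimensions to require a genuinely new handle on the long-time geometry of two-dimensional zero-temperature droplets, rather than a refinement of the renormalization scheme above.
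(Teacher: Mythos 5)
The statement you were asked about is labelled as a conjecture, and the paper does not prove it: the paper's contribution is only the asymptotic statement $p_c(\Z^d) \to 1/2$ (Theorem~\ref{to1/2}), and the full conjecture remains open for every fixed $d \ge 2$. You correctly recognised this, so there is no question of comparing your argument with a proof in the paper; what can be assessed is whether you proved the half that is provable and honestly delimited the half that is not, and on both counts your proposal is sound.

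Your lower-bound argument is correct and is exactly the folklore argument the paper alludes to in the introduction (``It is obvious that $p_c(\Z^d)\ge 1/2$, by symmetry, \dots using the fact (from ergodic theory) that fixation at `$+$' has probability either $0$ or $1$''): monotone coupling of the dynamics (sharing clocks and tie-breaking coins, and noting that when the upper configuration sees a strict majority it updates deterministically, so only the tied-in-both case needs the shared coin), translation-invariance of the i.i.d.\ field of spins, clocks and coins giving the $0$--$1$ law, and the global flip exchanging $p$ with $1-p$. One small quibble: on $\Z^d$ every vertex has degree $2d$, which is even for \emph{all} $d$, so ties and the coin-coupling are relevant in every dimension, not only when $d$ is even. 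Your upper-bound discussion is an accurate description of the renormalization strategy of Fontes--Schonmann--Sidoravicius (and of this paper's refinement of its first stage via majority bootstrap percolation), and your diagnosis of why it does not close at $p = 1/2 + \eps$ for fixed small $d$ --- the absence of control on the long-time geometry of shrinking $-$ droplets, especially in $d=2$ --- matches the state of the art. In short: you proved what is known, did not overclaim the rest, and the open part is genuinely open, not a gap in your reasoning.
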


Although the problem of determining $p_c(G)$ has been studied by many authors, and for various classes of infinite graphs $G$, surprisingly little is known. It is obvious that $p_c(\Z^d) \ge 1/2$, by symmetry, and it is straightforward to show that if $p = 1/2$ then $\Pr\big(\Z^d \textup{ fixates at `$+$'} \big) = 0$, using the fact (from ergodic theory) that fixation at `$+$' has probability either 0 or 1.  Nanda, Newman and Stein~\cite{NNS} proved that moreover, if $p = 1/2$ and $G = \Z^2$, then (almost surely) no vertex fixates, i.e., the state of every vertex changes an infinite number of times. However, even this simple statement is unknown if $d \ge 3$, and on the hexagonal lattice the situation is different, with some vertices fixating at `$+$' and others at `$-$', see~\cite{HN}. 

Glauber dynamics has also been studied in detail on the $d$-regular tree, $T_d$ (see for example~\cite{BKMP,CMart,How,MSW}), but even here very little has been proved about $p_c(T_d)$. Indeed, Howard~\cite{How} showed that $p_c(T_3) > 1/2$, and it was proved by Caputo and Martinelli~\cite{CMart} that $p_c(T_d) \to 1/2$ as $d \to \infty$ (in fact their result is more general, and this statement is straightforward to prove in the zero-temperature case), but for every $d \ge 4$ it is unknown whether or not $p_c(T_d) = 1/2$. For further results and problems about the case $p = 1/2$, on $\Z^d$ and on other graphs, see for example~\cite{CSN,How,NS,SCN,Wu}; for a good account of Glauber dynamics at non-zero temperatures, see~\cite{Mart}.

The best known upper bound on $p_c(\Z^d)$ is due to Fontes, Schonmann and Sidoravicius~\cite{FSS}. They proved, using multi-scale analysis, that $p_c(\Z^d) < 1$, i.e., that for each $d \ge 2$, there is an $\eps = \eps(d) > 0$, such that if $p \ge 1 - \eps$ then fixation at `$+$' occurs with probability 1. They moreover showed that this fixation occurs in time with a stretched exponential tail. The values of $\eps(d)$ they obtain converge rapidly to 0 as $d \to \infty$ (see Theorem~\ref{FSSthm} below), but despite this fact, their result will be a crucial tool in our proof.

We shall prove the following result.

\begin{thm}\label{to1/2}
$$p_c(\Z^d) \; \to \; \frac{1}{2}$$
as $d \to \infty$.
\end{thm}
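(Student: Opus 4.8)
The plan is to show that for any fixed $\eps > 0$, one has $p_c(\Z^d) < 1/2 + \eps$ once $d$ is sufficiently large. The difficulty, of course, is that the Fontes--Schonmann--Sidoravicius bound only guarantees fixation when $p \ge 1 - \eps(d)$, and $\eps(d) \to 0$, so we cannot feed $p$ close to $1/2$ directly into their theorem. The idea I would pursue is a \emph{bootstrapping} or \emph{two-stage} argument: show that if the initial density is $p = 1/2 + \eps$, then after running the dynamics for a short (but positive) time, a large ``effective'' fraction of $\Z^d$ is locally in a state so robustly biased towards `$+$' that it behaves, on a coarse-grained scale, like a configuration with density $1 - \eps(d)$ in a renormalised lattice $\Z^d$ (or $\Z^{d'}$ for some $d' \le d$ growing with $d$). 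Then one applies Theorem~\ref{FSSthm} to that renormalised picture.

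The key steps, in order, would be as follows. First, one quantifies what happens at a single site, or a small block of sites, in the very early dynamics: in high dimension a vertex that starts `$+$' and has slightly more than half its $2d$ neighbours also `$+$' will, with probability bounded away from zero (using concentration of $\mathrm{Bin}(2d, 1/2+\eps)$ around its mean), never flip --- indeed one wants to identify ``stable droplets,'' finite `$+$'-configurations that, regardless of the environment outside, never shrink under the majority update (analogous to the role of a monochromatic $\{0,1\}^k \times \{0\}^{d-k}$ sub-box). Second, one shows that such stable structures percolate: when $p = 1/2 + \eps$ and $d$ is large, the set of translates of a fixed stable droplet that appear in the initial configuration forms a supercritical percolation-type object, so that $\Z^d$ is partitioned into large regions each ``dominated'' by `$+$' in a monotone sense. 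Third --- and this is where the FSS input enters --- one uses the monotonicity (attractiveness) of zero-temperature Glauber dynamics to couple the actual process from below by a process on a renormalised lattice whose initial density exceeds $1 - \eps(d)$, and invokes Theorem~\ref{FSSthm}, together with the stretched-exponential speed of fixation it provides, to conclude that the renormalised process fixates at `$+$', hence so does the original.

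The main obstacle I expect is the renormalisation step: one must choose the block size $L = L(d)$ and the notion of a ``good block'' carefully so that (i) a block is good with probability at least $1 - \eps(d')$ where $d'$ is the dimension of the renormalised lattice, (ii) goodness is a local event depending only on a bounded neighbourhood, so the good blocks dominate an independent (or finitely-dependent, hence via Liggett--Schonmann--Stacey still comparable) percolation, and (iii) a good block genuinely forces the corresponding coarse site to be eventually `$+$' \emph{in the dynamics}, not merely at time $0$ --- this last point requires showing that the stable `$+$'-droplet inside a good block survives and spreads, which is exactly where one needs the ``no shrinking'' property of the droplet together with control of the clock process. Managing the interaction between the continuous-time clocks and the spatial renormalisation, so that ``good'' really is a finitely-dependent event, will be the technically delicate part; once that is in place, the passage to the limit $d \to \infty$ is essentially a matter of checking that the comparison parameters can be made to beat $\eps(d)$, which they can because $\eps(d)$ may be taken arbitrarily small at the cost of enlarging $d$.
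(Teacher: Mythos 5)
The overall architecture of your proposal is sound and matches the paper's at the top level: run the dynamics for a while, coarse-grain to a block-level configuration that is very dense in `$+$', and feed that into the Fontes--Schonmann--Sidoravicius machinery (Theorem~\ref{FSSthm}). You also correctly identify the main technical obstacle, namely making the ``good block'' event local enough that the renormalised configuration falls within the $\Omega(L,p)$ framework of~\cite{FSS}. However, the central mechanism you propose for getting there --- \emph{stable `$+$'-droplets} --- does not exist in $\Z^d$, and this is a genuine gap.

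The problem is corner erosion. In $\Z^d$ under the majority rule, there is \emph{no} finite set $D$ of `$+$'-vertices, of size bounded independently of $d$ (or indeed of any finite size), that is stable against an all-`$-$' environment. Consider even a large axis-parallel box of `$+$'s: its corner vertices have exactly $d$ neighbours inside and $d$ outside, so they are at the tie-break and flip to `$-$' with probability $1/2$ each time their clock rings; once a corner erodes, the adjacent edge vertices become unstable, and the whole box shrinks. Small sub-boxes $\{0,1\}^k \times \{0\}^{d-k}$ of the sort you mention are much worse: their boundary vertices have at most $k < d$ `$+$'-neighbours out of $2d$, so they flip immediately. The same issue defeats the claim in your first step that a vertex which starts `$+$' with slightly more than $d$ `$+$'-neighbours ``will never flip'': that controls only the initial environment, and the neighbours can flip away over time. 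This erodibility of finite `$+$'-clusters is exactly what makes the zero-temperature dynamics in $\Z^d$ hard, and any proof has to confront it rather than sidestep it by postulating stable droplets.

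What the paper does instead is dual to your droplet idea, and avoids the obstruction entirely: rather than trying to certify `$+$'-regions that can never shrink, it \emph{bounds the growth of `$-$'-regions} by coupling the dynamics with bootstrap percolation. The set of `$+$'-vertices in a block $B'$ that could ever flip to `$-$' (under influence only from inside $B'$) is contained in the $d$-neighbour bootstrap closure of the initial `$-$'-set $A^-$. The key input, Lemma~\ref{tool} (from~\cite{Maj}), says that when the density of $A^-$ is slightly subcritical, the modified process $\textup{Boot}(d,k,m)$ almost surely \emph{stalls} within $O(1)$ rounds. This is what simultaneously delivers the two properties your sketch needs but does not supply: (i) the eventually-`$-$' set is tiny (density $\le d^{-1000}$, Lemmas~\ref{X} and~\ref{qbound}), and (ii) membership in it is a bounded-range event, hence $C$-independent for a fixed constant $C$ (Observation~\ref{15ind}, Lemma~\ref{z30}). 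A second bootstrap coupling with $\textup{Boot}(d,40,m)$, plus a clock-path argument (Observation~\ref{path}, Lemma~\ref{F''}), upgrades this to ``every vertex of $B$ is `$+$' at time $200d^5+d$ in the isolated process $\P$, and $\P$ agrees with the true dynamics on $B$'' --- which is precisely the block-level event needed for Theorem~\ref{FSSthm}. If you want to repair your sketch, I would suggest replacing steps one and two (stable droplets plus supercritical percolation of droplets) with a bootstrap-percolation coupling of this kind; without something playing that role, there is no way to control how far the `$-$'-regions spread, and the renormalisation step cannot be carried out.

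Two smaller remarks. Renormalising to a lower-dimensional lattice $\Z^{d'}$ does not help and would require a version of the FSS theorem that the paper does not have; the block lattice stays $d$-dimensional, with block side $n=3\cdot 2^d$ chosen so that $n \gg d^5$ (the time horizon) and the bound of Theorem~\ref{FSSthm} can absorb the $n$-dependence. And the Liggett--Schonmann--Stacey comparison you invoke is not needed here, because Theorem~\ref{FSSthm} is already stated for the finitely-dependent block class $\Omega(L,p)$ rather than product measures.
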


We remark that the same result also holds (in the limit as $d \to \infty$) if, instead of choosing the state uniformly when the number of `$+$' and `$-$' neighbours are equal, we were to choose it to be `$+$' with probability $\alpha \in (0,1)$ (see also~\cite{FSS}). For simplicity, however, we shall assume throughout that $\alpha = 1/2$, as in the definitions above. We note also that the proof relies on very few properties specific to the lattice $\Z^d$, and so it is likely that the same techniques can be extended to a much wider family of high-dimensional lattices (see Theorem~2.2 of~\cite{Maj}).

We shall moreover give concrete bounds on the rate of convergence of $p_c$. These will be easy to read out from our later results, and are certainly not optimal (since we believe Conjecture~\ref{folklore} to be true). However, for the reader's convenience, we state here the precise result that we shall prove. Let $\eps > 0$ and $d \in \N$, with $\eps^2 d \ge 10^{10}\log d$. Then
$$p_c(\Z^d) \; \le \; \frac{1}{2} \, + \, \eps.$$
We remark that, although the constant $10^{10}$ could be improved somewhat with a little extra effort, the techniques in this paper do not work for small values of $d$.

The proof of Theorem~\ref{to1/2} is based on two couplings of Glauber dynamics on large sub-blocks of $\Z^d$ with bootstrap percolation, a monotone version of Glauber dynamics which has itself been studied extensively (see for example~\cite{AL,BB,BPP,CC,Hol,Sch}), and which we shall define in Section~\ref{sketch}. We shall use powerful tools developed by Balogh, Bollob\'as and Morris~\cite{Maj} (see Lemma~\ref{tool}, below) to show that, after time $O(d^5)$, very few vertices are in state `$-$'. Finally we shall apply the result of Fontes, Schonmann and Sidoravicius~\cite{FSS}. The crucial point, throughout the proof, will be that we shall retain independence except at short distances.

The rest of the paper is organised as follows. In Section~\ref{sketch} we recall the definition of bootstrap percolation and the main results of~\cite{Maj} and~\cite{FSS}, and give a sketch of the proof of Theorem~\ref{to1/2}. In Sections~\ref{firstd} and~\ref{afterd} we prove that by the end of the two couplings (which occurs in time $O(d^5)$), sufficiently many vertices are in state `$+$' that we may apply the method of Fontes, Schonmann and Sidoravicius~\cite{FSS}. Finally, in Section~\ref{proofsec}, we complete the proof of Theorem~\ref{to1/2}.

\section{Bootstrap percolation}\label{sketch}

In this section we describe the main tools we shall use, and give a sketch of the proof of Theorem~\ref{to1/2}. We begin by recalling the result of Fontes, Schonmann and Sidoravicius~\cite{FSS}. The following theorem, which is slightly more general than the one they state, is implicit in their proof (see below). Let $L \in \N$, and partition $\Z^d$ into blocks of size $L^d$ in the obvious way. Let $p \in (0,1)$, and consider the collection $\Omega(L,p)$ of probability distributions on $\{+,-\}^{\Z^d}$ satisfying the following conditions:
\begin{itemize}
\item $\sigma(x) = \sigma(y)$ if $x$ and $y$ are in the same block $B$. (Let $\sigma(B) = \sigma(x)$ for $x \in B$.)\\[-1ex]
\item $\Pr\big(\sigma(B) \textup{ is `$+$'} \big) = p$ for every block $B$.\\[-1ex]
\item Given any collection of blocks $\{B_1,\ldots,B_k\}$ with $\|B_i - B_j\|_\infty \ge 2$ whenever $i \neq j$, the states $\sigma(B_1),\ldots,\sigma(B_k)$ are independent.
\end{itemize}
Now, define
$$p^{(L)}_c(\Z^d) \; := \; \inf \left\{ p \, : \, \Pr\big( \Z^d \textup{ fixates at `$+$'} \big) = 1 \textup{ for every } \sigma \in \Omega(L,p) \right\},$$
where $\sigma$ is the initial distribution of states. Note in particular that $p_c(\Z^d) \le p^{(1)}_c(\Z^d)$.

\begin{thm}[Fontes, Schonmann and Sidoravicius~\cite{FSS}]\label{FSSthm}
There exists an $\eps > 0$ such that, for each $2 \le d \in \N$, and each $L \in \N$,
$$p^{(L)}_c(\Z^d) \; \le \; 1 \, - \, \eps \left( \frac{1}{2L} \right)^{d^2}.$$
\end{thm}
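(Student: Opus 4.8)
The plan is a multi-scale (renormalisation) argument in the spirit of Fontes, Schonmann and Sidoravicius. A first, easy reduction removes the dependence in $\Omega(L,p)$: the block-states $(\sigma(B))$ form a finite-range-dependent $\{+,-\}$-field of density $p$ on the lattice of blocks $\cong \Z^d$, so it stochastically dominates (e.g.\ by the Liggett--Schonmann--Stacey comparison, or by arguing directly within the scheme below) an i.i.d.\ Bernoulli field of density $p'$, with $p' \to 1$ as $p \to 1$ at a rate depending only on $d$; and since majority dynamics is monotone --- in the coupling that uses the same clocks and tie-breaking coins, more `$+$'s in the initial state can only create more `$+$'s later --- it suffices to prove a.s.\ fixation at `$+$' when the blocks are i.i.d.\ `$+$' with probability $p'$. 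Now fix a sequence of scales $L = \ell_0 < \ell_1 < \ell_2 < \cdots$, with $\ell_1/\ell_0$ a suitably large power of $d$ and the later ratios growing only polynomially in the level, and at each level $k$ partition $\Z^d$ into cubes of side $\ell_k$. Call a level-$0$ cube \emph{bad} if its block-spin is `$-$', and, recursively, a level-$k$ cube bad if too many of its level-$(k-1)$ sub-cubes are bad (or fail to be suitably spread out); finally, say that a vertex $v$ is \emph{$k$-regular} if the bad level-$j$ cubes within distance $\ell_k$ of $v$ form, for every $j \le k$, only small and well-separated clusters.

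The heart of the matter is a deterministic statement: if $v$ is $k$-regular then $v$ is \emph{eventually} `$+$' --- in state `$+$' from some finite time on --- and this conclusion does not depend on the initial spins outside the ball of radius $\ell_k$ about $v$. The mechanism is that a solid all-`$+$' slab of thickness at least $1$ is stable (a vertex on its face sees $2d-1$ of its $2d$ neighbours in state `$+$'), so all-`$+$' regions can only be eroded at corners and edges; consequently a small, isolated all-`$-$' cluster in an all-`$+$' environment shrinks to nothing, corrupting only a neighbourhood whose size is controlled in terms of its own diameter and of $d$; and the hierarchical sparsity built into ``$k$-regular'' ensures that the corruption originating at scale $j$ stays far enough from $v$ --- distances across scales grow geometrically, while the corruption radii do not keep up --- that the total damage never reaches $v$. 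Making this quantitative, in particular deciding how large a ``good'' droplet must be for the enclosed `$-$'-regions to provably erode away without triggering a cascade, is the main obstacle. It is also where the dimension enters heavily: flipping a vertex to `$-$' requires more than $d$ of its $2d$ neighbours to be `$-$', so `$-$'-regions are rigid across $\sim d$ layers and a solid `$-$'-block of side $L$ can corrupt a region of size polynomial in $L$ and exponential in $d$; propagating this through the $\sim d$ nested scales and the $2d$ coordinate directions is what produces the factor $L$ and the exponent $d^2$ in the final bound.

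The probabilistic step is then a union bound. Writing $q_k := \Pr\big( 0 \textup{ is not } k\textup{-regular}\big)$, one sums, over the ways a large cluster of bad level-$k$ cubes or a too-close pair of such clusters could lie near $0$, the probability that all the relevant level-$k$ cubes are bad (each bad with probability at most $q_{k-1}$, disjoint ones independent). This gives a recursion of the shape
$$q_k \;\le\; \big( \ell_k / \ell_{k-1} \big)^{Cd}\, q_{k-1}^{\,2},$$
and, provided $q_0 = 1 - p'$ is below a threshold of the form $\eps_0\,(2L)^{-d^2}$ --- this is where the geometric constraints of the deterministic step are encoded --- the prefactors are absorbed and $q_k$ decays doubly exponentially in $k$.

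Hence $\sum_k q_k < \infty$, and by the Borel--Cantelli lemma the origin is almost surely $k$-regular for all large $k$, so by the deterministic step it is eventually `$+$'. The origin plays no special role, so $\Z^d$ fixates at `$+$' almost surely; tracing back through the reduction from $p$ to $p'$ shows this holds for every $\sigma \in \Omega(L,p)$ as soon as $1 - p \le \eps\,(2L)^{-d^2}$ for a suitable universal $\eps > 0$, which is exactly the claimed bound $p^{(L)}_c(\Z^d) \le 1 - \eps(2L)^{-d^2}$. (Recording how quickly the origin becomes $k$-regular additionally yields the stretched-exponential tail on the fixation time mentioned in the introduction.)
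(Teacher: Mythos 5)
Your proposal follows the same overall route as the paper: a multi-scale renormalisation in the style of Fontes--Schonmann--Sidoravicius, with $\ell_0 = L$, a hierarchy of good/bad cubes, a recursion of the form $q_k \le (\ell_k/\ell_{k-1})^{Cd} q_{k-1}^2$ that decays doubly exponentially, and Borel--Cantelli. The paper does not re-prove this; it simply observes that $\Omega(L,p)$ is exactly the ``block-dynamics'' of Section~4 of~\cite{FSS} and explains how to rescale the parameters in their proof (set $\ell_0 = L$, inflate $t_k$ by $L^d$, etc.), collecting the constraints $q \le d^{-O(d)}$ and $q \le L^{-d^2}$. Your sketch reconstructs the shape of that argument and, like the paper, delegates the hard deterministic/probabilistic core --- that an enclosed `$-$'-droplet in a well-prepared `$+$'-environment erodes before it can corrupt the surrounding scales --- to the existing FSS machinery. (Your slab heuristic is also slightly off at $d=2$, where a thickness-$1$ all-`$+$' hyperplane vertex with both perpendicular neighbours `$-$' only sees a tie, not a strict majority.)

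The one substantive departure is your opening reduction from $\Omega(L,p)$ to an i.i.d.\ block field via Liggett--Schonmann--Stacey domination, and this would cost you the quantitative bound. The block spins form a field on $\Z^d$ that is $1$-dependent in the $\ell_\infty$-metric, and LSS domination of such a field by product measure loses a power of order $3^{-d}$ in the density gap: $1-p' \approx (1-p)^{3^{-d}}$. Feeding the i.i.d.\ threshold $1-p' \le \eps(2L)^{-d^2}$ back through this reduction would require $1-p \lesssim \eps^{3^d}(2L)^{-d^2\cdot 3^d}$, which is far weaker than the claimed $1-p \le \eps(2L)^{-d^2}$. You do hedge by saying one can instead ``argue directly within the scheme'', and that is precisely what FSS do and what the paper relies on: the multi-scale recursion is set up for block-dynamics with this short-range dependence built in, with no i.i.d.\ comparison in between. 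This is why the paper emphasises that $\Omega(L,p)$ coincides with FSS's block-dynamics --- so their proof applies essentially verbatim once the scales are adjusted, with no LSS-type loss.
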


The theorem above follows from a multi-scale analysis, using ideas from $2$-neighbour bootstrap percolation (defined below). Theorem~\ref{FSSthm} is slightly stronger than Theorem~1.1 of~\cite{FSS}, but follows from almost exactly the same proof. Indeed, the definition of $\Omega(L,p)$ above is precisely the `block-dynamics' defined in Section~4 of~\cite{FSS}; the theorem applies to any initial distribution satisfying these conditions. In order to obtain the dependence on $L$ and $d$ in Theorem~\ref{FSSthm}, we adjust the proof in~\cite{FSS} as follows: set $\ell_0 = L$ in (4.1), increase $t_k$ by a factor of $L^d$ in (4.2), and weaken the upper bound (4.8) by a factor of $L$. For inequality (4.6) and Step 1 of the proof we require $q := 1 - p \le \exp( \delta/q^{1/(d-1)} )$, for some polynomial $\delta = \delta(d)$; for Step 2 we require (roughly) that $t_{k+1} \gg (1/q_k) L^d$; and for Step 3 we require $t_k \ll \ell_k$. The first of these inequalities is satisfied if $q \le d^{-O(d)}$, the second and third are satisfied if $q \le L^{-d^2}$. We remark that in fact, by choosing $\ell_k$ much larger, one could improve the bound in Theorem~\ref{FSSthm} to $1 - \eps (Ld)^{-O(d)}$. We shall not need this slight strengthening however; in fact a much weaker bound would suffice.

In order to prove Theorem~\ref{to1/2} we shall replace the first stage of the argument of~\cite{FSS} with a more careful calculation, using ideas from majority bootstrap percolation in high dimensions. We remark that we shall not prove a result corresponding to Theorem~\ref{FSSthm}. Our method uses, and absolutely requires, total independence of initial states.

Before embarking on our sketch, let us recall first some of the tools and ideas of~\cite{Maj}, which will be crucial for the proof. First, given a (finite or infinite) graph $G$ and an integer $r \in \N$, we call \emph{$r$-neighbour bootstrap percolation} on $G$ the following deterministic process. Let $A \subset V(G)$ be a set of initially `infected' vertices, and, at each time step, let new vertices of $G$ be infected if they have at least $r$ infected neighbours, and let infected vertices stay infected forever. Formally, set $A_0 = A$, and
$$A_{t+1} \; := \; A_t \: \cup \: \big\{ v \in V(G) \: : \: |\Gamma(v) \cap A_t| \ge r \big\}$$
for each integer $t \ge 0$. The \emph{closure} of $A \subset V(G)$ is the set $[A] \; = \; \bigcup_t A_t$ of eventually infected vertices. We say that the set $A$ \emph{percolates} if eventually the entire vertex set is infected, i.e., if $[A] = V(G)$. If $G$ is $d$-regular and $r = \lceil d/2 \rceil$, then we call the process \emph{majority bootstrap percolation}.

Bootstrap percolation was introduced by Chalupa, Leath and Reich~\cite{CLR} in 1979, and has since been studied by many authors, most frequently on $\Z^d$ and $[n]^d$, the $d$-dimensional torus on $\{1,\ldots,n\}^d$ (see for example~\cite{AL,Maj,n^d,CC,Sch}), but also on trees~\cite{BPP,BS,FS} and random regular graphs~\cite{BPi,Svante}. The elements of the set $A$ are normally chosen independently at random, and the main problem is to determine the critical threshold, $p_c(G,r)$, at which percolation becomes likely. To be precise, write $P_p(G,r)$ for the probability that $A$ percolates in $r$-neighbour bootstrap percolation on $G$ if the elements of $A$ are chosen independently at random, each with probability $p$, and define
$$p_c(G,r) \; := \; \inf\big\{ p \,:\, P_p(G,r) \ge 1/2 \big\}.$$
Balogh, Bollob\'as and Morris~\cite{Maj} recently proved the following theorem about majority bootstrap percolation on $[n]^d$.

\begin{thm}[Balogh, Bollob\'as and Morris~\cite{Maj}]\label{majthm}
Let $n = n(d)$ be a function satisfying
$$2 \; \le \; n \; = \; 2^{2^{O\left(\sqrt{\frac{d}{\log d}}\right)}},$$
or equivalently, $d \ge \eps(\log \log n)^2\log\log\log n$ for some $\eps > 0$. Then
$$p_c\big( [n]^d, d \big) \; = \; \frac{1}{2} \, + \, o(1)$$ as $d \to \infty$.
\end{thm}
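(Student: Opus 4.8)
The plan is to establish matching upper and lower bounds, with the lower bound $p_c([n]^d,d) \ge 1/2 - o(1)$ being essentially routine and the upper bound $p_c([n]^d,d) \le 1/2 + o(1)$ carrying all the difficulty. For the lower bound, observe that if $p \le 1/2 - \eps$, then by a union bound over the $n^d$ vertices, a Chernoff estimate shows that with high probability \emph{every} vertex has fewer than $d$ infected neighbours in $A_0$ (each vertex has $2d$ neighbours, and $2dp < d - \eps d$), so the process never starts: $[A] = A_0 \neq V([n]^d)$ whenever $p < 1$. Hence $p_c([n]^d,d) \ge 1/2 - \eps$ for every $\eps>0$ once $d$ is large, which gives the lower bound. (One should be slightly careful at the boundary of $[n]^d$, where degrees are smaller, but this only makes percolation harder, so it does not affect the lower bound.)

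For the upper bound, fix $p = 1/2 + \eps$ with $\eps = \eps(d) \to 0$ slowly. The strategy I would follow is a multi-scale / nested-rectangles argument in the spirit of the classical proofs for $2$-neighbour bootstrap percolation on $\Z^d$, but adapted to the majority threshold $r = d$. Define an increasing sequence of ``box sizes'' $1 = \ell_0 < \ell_1 < \cdots < \ell_k = n$, and show inductively that, with overwhelming probability, a box of side $\ell_{j+1}$ becomes entirely infected once a suitable sub-box of side $\ell_j$ is infected. The base case exploits the high dimension: in a box of modest constant side length, the expected number of infected neighbours of an interior vertex is $2dp = d + 2\eps d$, which exceeds the threshold $d$ by a margin of order $\eps d$; so by concentration, a positive fraction (indeed almost all) interior vertices are immediately infected, and a short combinatorial argument shows a small fully-infected sub-box appears somewhere. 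The inductive step is the ``growth'' mechanism: given an infected box $C$ of side $\ell_j$ sitting inside a box $C'$ of side $\ell_{j+1}$, a vertex $v \in C' \setminus C$ adjacent to the infected region already has some neighbours infected ``for free'' (those lying in the direction of $C$) and needs only that enough of its remaining neighbours lie in $A_0$; since each such vertex has close to $d$ neighbours in $A_0$ in expectation plus a few guaranteed infected ones, the infected region spreads layer by layer across $C'$ with failure probability that can be made summable over scales. The number of scales $k$ is governed by how fast the $\ell_j$ grow; the hypothesis $n = 2^{2^{O(\sqrt{d/\log d})}}$ is exactly what is needed so that the accumulated failure probabilities — each roughly of the form $\exp(-c\eps^2 d)$ times a polynomial-in-$\ell_{j+1}$ volume factor — still sum to $o(1)$ across all $O(\log\log n)$ scales.

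The main obstacle will be the inductive growth step, and specifically the tension between two competing requirements. To make the per-scale failure probability small we want the margin $\eps d$ to be large, but the theorem demands $\eps \to 0$; the resolution is that the relevant large-deviation bound is of the form $\exp(-c\eps^2 d)$, and the restriction on $n$ ensures there are few enough scales (and small enough volumes $\ell_{j+1}^d$) that $\ell_{j+1}^d \exp(-c\eps^2 d) \to 0$ remains valid even with $\eps d^{1/2} \to \infty$ only slowly. Managing the dependence between overlapping boxes at different scales — so that the ``for free'' infected neighbours in the growth step are not double-counting randomness already used — requires a careful ordering of the revelation of $A_0$, exposing the initial set region by region as the front advances; this bookkeeping, rather than any single estimate, is the delicate part. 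I would also need a variant of the argument handling boxes that touch the boundary of $[n]^d$ (where some vertices have degree less than $2d$), but since the theorem is about $[n]^d$ defined as a torus in the bootstrap literature — or alternatively one can run the argument in a central sub-box and then note the torus has no boundary — this can be dispatched cleanly. All of the constants here are far from optimal, but optimality is not the point: the content of the theorem is that $p_c$ is pinned to $1/2$ up to a $o(1)$ term, and the multi-scale scheme above delivers exactly that.
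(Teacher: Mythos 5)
Your lower-bound argument breaks down at its first step. You claim that for $p = \tfrac12 - \eps$ a union bound over the $n^d$ vertices shows, via Chernoff, that with high probability no vertex has $d$ or more infected neighbours in $A_0$, so $[A] = A_0$. But each vertex has at least $d$ infected neighbours (out of $2d$) with probability of order $e^{-c\eps^2 d}$, while there are $n^d \ge 2^d$ vertices; the union bound yields $n^d e^{-c\eps^2 d} = e^{\,d(\log n - c\eps^2)}$, which diverges as soon as $n \ge 2$ and $\eps$ is small, and certainly when $\eps \to 0$. In fact the expected number of vertices infected in the first round is exponentially large, not $o(1)$; the process \emph{does} start, at very many sites. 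The entire content of the lower bound $p_c \ge \tfrac12 - o(1)$ is to show that these abundant local infections nevertheless fail to cascade. That is precisely why Balogh, Bollob\'as and Morris introduce the modified process $\textup{Boot}(r,k,m)$: its key property (Lemma~6.3 of \cite{Maj}, reproduced here as Lemma~\ref{tool}) is that after a bounded number of rounds it almost surely adds nothing further, and since it dominates the genuine process, the genuine process is confined to a closed set that is far from the whole vertex set. You have the relative difficulty of the two directions inverted: the lower bound is where the bootstrap machinery lives, and your sketch of it is not salvageable as written.

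Two further remarks. First, Theorem~\ref{majthm} is \emph{cited} from \cite{Maj} rather than proved in this paper, so there is no in-paper proof to match line by line; the paper only records the tool ($\textup{Boot}(r,k,m)$ and Lemma~\ref{tool}) it needs. Second, your nested-rectangles scheme for the upper bound is more in the spirit of fixed-$r$ bootstrap percolation, where $p_c \to 0$ and one grows a droplet from a rare nucleus. In the majority regime at $p = \tfrac12 + \eps$, already after one round all but an $e^{-\Theta(\eps^2 d)}$ fraction of vertices are infected, so the correct picture is filling scattered holes in a nearly-full configuration rather than growing a box; the ``for-free infected neighbours from the adjacent box'' bookkeeping you describe is not the bottleneck, and the hypothesis on $n$ in the theorem statement is there to control local tree-likeness for the lower-bound containment argument, not to tame a sum over scales in an upper-bound growth argument.
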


We remark that the lower bound on $d$ guarantees that $[n]^d$ is sufficiently `locally tree-like', in the sense that balls with small radii grow quickly. We shall use this observation again later in the proof of Theorem~\ref{to1/2} (see Lemmas~\ref{F'} and~\ref{inact}). Theorem~\ref{majthm} contrasts with the case where $d$ is fixed, when $p_c([n]^d,d) = o(1)$. For some recent, much more precise results about the case $d$ constant, see~\cite{d=r=3,alldr,CM,Hol,Me2d}.

In order to prove the lower bound in Theorem~\ref{majthm}, the authors introduced the following modified bootstrap process. Let $k,m \ge 0$ and $S^{(0)} \subset V(G)$.
\begin{itemize}
\item If $0 \le j \le k - 1$, then
$$S^{(j+1)} \; = \; S^{(j)} \, \cup \, \big\{ x \, :\, |\Gamma(x) \cap S^{(j)}| \ge r - (k-j)m \big\}.$$
\item If $j \ge k$, then $S^{(j+1)} \, = \, S^{(j)} \, \cup \, \big\{ x \, :\, |\Gamma(x) \cap S^{(j)}| \ge r \big\}.$
\end{itemize}
We call this process $\textup{Boot}(r,k,m)$. Note that it dominates the original process (i.e., the $\textup{Boot}(r,k,0)$ process), in the sense that if the original process percolates, then so does the modified process. It also has the extra property that if the original process does not percolate (and $m$ is chosen correctly), then the modified process \emph{almost always} stops quickly. (For a more precise formulation of this statement, see for example Lemma 6.3 of \cite{Maj}, or Lemma~\ref{tool} below.)

We need one more definition.

\begin{defn}
Given a (possibly infinite) graph $G$, an integer $C \in \N$, and a collection of events $\E = \{E_v : v \in V(G)\}$, one for each vertex of $G$, we say that the events in $\E$ are \emph{$C$-independent} if the following holds. For each $k \in \N$, if $\{v_1,\ldots,v_k\} \subset V(G)$ satisfies $d_G(v_i,v_j) \ge C$ for every $i \neq j$, then the events $\{E_{v_1},\ldots,E_{v_k}\}$ are independent.
\end{defn}

We are now ready to give our sketch of the proof of Theorem~\ref{to1/2}. First let $n = 2^d$, and partition $\Z^d$ into blocks of size $[n]^d$ in the obvious way. Note that $d = \log n \gg (\log \log n)^2\log\log\log n$, so the method of the proof of Theorem~\ref{majthm} will apply to these blocks. Consider the graph $G$ induced by one particular block, $B$. The basic idea is as follows. First we run the majority bootstrap process on $G$, with the infected sites being those initially in state `$-$'. Next we observe that, since (by Theorem~\ref{majthm}) the initial density of `$-$' vertices is subcritical, very `few' vertices change state. Finally, we run Glauber dynamics until all the clocks associated with vertices of $G$ have rung at least once. If the states of the vertices after the bootstrap process were all independent then, by Chernoff's inequality, only about $e^{-\eps^2 d}|B|$ of them would have as many `$-$' neighbours as `$+$' neighbours (since very few have changed state), so almost all should end up in state `$+$'. However, this is not the case: the bootstrap process brings in long-distance dependence between the states. We shall therefore have to be a little more clever.

Indeed, what we actually do is to couple the original process $\P$ up to time $d$, with a process $\Q$, which is \emph{almost always} biased towards state `$-$', but which still finishes with all but (about) $e^{-\eps^2 d}|B|$ vertices in state `$+$', \emph{and} only has short-distance dependencies! The process $\Q$ is as follows. First, run the $\textup{Boot}(d,8,m)$ process for `$-$' vertices in a slightly larger block $B' \supset B$ (in fact $B'$ is larger by a factor of $5/3$), with $m = \eps d/24$, for eight steps only. We remark that the number eight here could be replaced by any $k \ge 8$; we need only that $\eps^{k+2} d^{k+1} \ge d^4$ (see Lemma~\ref{F}).

Now, with probability about $1 - e^{-d^4}$, the set of vertices in state `$-$' thus obtained will be closed under the majority bootstrap process, in which case no other `$+$' vertex in $B$ can ever again change state in $\P$, unless it is affected by vertices outside $B'$, which (we shall show, see Lemma~\ref{F''}) is very unlikely to occur before time $O(d^5)$. We ignore (i.e., assume to be entirely `$-$') those blocks for which either of these bad events holds (i.e., those which are not closed under bootstrap, and those which are affected by the state of some vertex outside $B'$).

Assume from now on that neither of these two bad events holds for the block $B$, and let $X$ be the set of vertices in $B$ which are `infected' during the $\textup{Boot}(d,8,m)$ process. This set contains all of those vertices which are initially in state `$+$', but could potentially change state without being affected by anything outside $B'$. The events $\{x \in X\}_{x \in B}$ are 17-independent, by the definition of the $\textup{Boot}(d,8,m)$ process. Moreover, we shall show, using the method of~\cite{Maj}, that $\Pr(x \in X) \le 2e^{-2\eps^2 d}$ for each $x \in B'$ (see Lemma~\ref{X}).

Now, let a vertex $x \in B'$ be in state `$-$' after the process $\Q$ if either its clock has not yet rung in $\P$ by time $d$, or if it had at least $d$ neighbours in state `$-$' initially, or if it has at least one neighbour in $X$. The probability that at least one of these events occurs is at most
$$e^{-d} + e^{-2\eps^2 d} + 4d e^{-2\eps^2 d} \; \le \; 5de^{-2\eps^2 d} \; < \; \left( \frac{1}{d} \right)^{1000}$$
since $\eps^2 d \ge 10^{10} \log d$ (see Lemma~\ref{qbound}). Moreover, assuming that the two `bad' events defined above do not hold, the set of `$-$' vertices obtained through $\Q$ contains that obtained through $\P$, run up to time $d$ (see Lemma~\ref{F}).

We have shown that up to time $d$, the process $\P$ may be `approximately' coupled with a process in which
$$\Pr\big(\sigma(x) \textup{ is `$-$' after time }d\big) \; \le \; d^{-1000},$$
and the events $\{\sigma(x) \textup{ is `$-$' after time } d\}_{x \in B}$ are 19-independent (we lose a little more independence in going from $X$ to $\Q$). The proof is now completed in three more steps. First, we describe a second coupling, with a process in which the probability a vertex is \emph{ever again} in state `$-$' after time $d$ (unless affected by vertices outside $B'$) is still at most $d^{-500}$, and in which these events are 120-independent (see Lemmas~\ref{F'} and~\ref{z30}). Next we deduce that after time $200d^5 + d$, with very high probability every vertex of $B$ will be in state `$+$' (see Lemma~\ref{inact}). Since $n = 2^d \gg 200d^5$, it is very unlikely that the state of any vertex in $B$ has by this point been affected by any vertex outside $B'$ (see Lemma~\ref{F''}). Finally, we apply Theorem~\ref{FSSthm} to the distribution of states obtained on the blocks $B$.

Throughout the proof we shall have a large amount of leeway in our calculations, and so we shall often be able to use very weak approximations. The crucial point, however, is that the set $X$ must be small (see Lemma~\ref{X}); it is at this step that the proof is sharp.

\section{A coupling up to time $d$}\label{firstd}

In this section we shall prove the required facts about the processes $\P$ and $\Q$. First let us define $\P$ and $\Q$ precisely.

Let $B$ be a block in $\Z^d$ of size $[n]^d$, where $n = 3 \times 2^d$, and let $B'$ be a block with the same centre as $B$, but of size $[n']^d$, where $n' = 5 \times 2^d$. The process $\P$ is simply Glauber dynamics run on the graph $\Z^d[B']$ (the subgraph of $\Z^d$ induced by the set $B'$) with `$+$' boundary conditions.

Next we shall define the process $\Q$ on the block $B'$. Let $A^+$ denote the set of vertices initially in state `$+$' in $B'$, and let $A^-$ denote the set of vertices initially in state `$-$', so $A^- = B' \setminus A^+$. Let $S^{(0)} = A^-$, let $m = \ds\frac{\eps d}{24}$, run the $\textup{Boot}(d,8,m)$ process, defined above, on the graph $G = [n']^d$ (i.e., the \emph{torus} with vertex set $B'$), and let $X = S^{(8)} \setminus A^-$. Finally, let the state $\sigma(x)$ of a vertex $x \in B'$ be declared `$-$' after the process $\Q$ if any of the following is true:
\begin{itemize}
\item Its clock has not yet rung in $\P$ by time $d$.
\item It has at least $d$ neighbours in $A^-$.
\item It has at least one neighbour in $X$.
\end{itemize}
Let $Z$ denote the set of vertices in $B'$ whose state is declared `$-$' after $\Q$.

Let $F$ denote the event that there exists a vertex in $B'$ whose state  is `$-$' at time $d$ in $\P$, but not after the process $\Q$. We shall use the following result, which follows immediately from Lemma 6.3 of~\cite{Maj}.

\begin{lemma}\label{tool}
Let $N,d \in \N$, and let $G = [N]^d$. Let $\eps > 0$ and $p = \ds\frac{1}{2} - \eps$, and choose the elements of $S^{(0)} \subset V(G)$ independently at random, each with probability $p$. Further, let $m = \ds\frac{\eps d}{24}$ and $1 \le k \le 8$. Then, in the $\textup{Boot}(d,8,m)$ process, for every $x \in V(G)$,
$$\Pr\big(x \in S^{(k+1)} \setminus S^{(k)} \big) \; \le \; \exp\left( -\frac{\eps^{k+2} d^{k+1}}{8^{2k+1} (k+1)!} \right).$$
\end{lemma}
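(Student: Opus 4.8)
The plan is to derive the lemma directly from Lemma~6.3 of~\cite{Maj}, of which it is in essence a restatement; the substance of the proof is in matching up the two settings. First I would record the dictionary: the host graph $G=[N]^d$ is $2d$-regular; the seed $S^{(0)}$ is a $p$-random subset with $p=\frac12-\eps$; $\textup{Boot}(d,8,m)$ is the modified bootstrap process of~\cite{Maj} with majority threshold $r=d$, eight weakened steps, and gap parameter $m=\eps d/24$; and the exponent $\eps^{k+2}d^{k+1}/(8^{2k+1}(k+1)!)$ is precisely the bound obtained there after this substitution. Note that the right-hand side is always at most $1$, so the statement needs no hypothesis on $d$ to be meaningful; for it to be non-trivial one needs $\eps^2 d$ to be at least a large multiple of $\log d$, which is exactly the regime relevant to this paper and to~\cite{Maj}, and in which the blocks $[N]^d$ are locally tree-like.

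For a self-contained account, here is the mechanism. If $x\in S^{(k+1)}\setminus S^{(k)}$ then, comparing the infection thresholds at steps $k$ and $k-1$, at least $m$ neighbours of $x$ must lie in $S^{(k)}\setminus S^{(k-1)}$; iterating this observation, and using that balls of radius $\le 8$ in $[N]^d$ genuinely branch (there are no short cycles), one extracts a ``witness tree'' rooted at $x$, of depth $k$ and out-degree $m$, whose vertices at distance $\ell$ from $x$ lie in $S^{(k+1-\ell)}\setminus S^{(k-\ell)}$ and whose $\sim m^k$ leaves each have at least $d-8m$ neighbours in $S^{(0)}$. One then unions over the at most $(2d)^{O(m^k)}$ embeddings of such a tree and, for each fixed embedding, bounds the probability that every leaf has its quota of neighbours in $S^{(0)}$. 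Using that $[N]^d$ is triangle-free and that two vertices have at most two common neighbours, one thins the leaf-neighbourhoods to (mostly) disjoint sets so that the corresponding events become (essentially) independent; each has probability at most $\Pr\big(\textup{Bin}(2d,\frac12-\eps)\ge d-O(m)\big)\le\exp(-c\eps^2 d)$, since the threshold $d-O(m)=d-O(\eps d)$ lies $\Omega(\eps d)$ above the mean $d(1-2\eps)$ --- this positivity of the slack is exactly why $m$ is taken to be the rather small $\eps d/24$. Multiplying over the $\sim m^k$ leaves and absorbing the factor $(2d)^{O(m^k)}=\exp(O(m^k\log d))$ (here the assumption $\eps^2 d\gg\log d$ is used again), one gets a bound of the shape $\exp\!\big(-\Theta(\eps^2 d\cdot m^k)\big)=\exp\!\big(-\Theta(\eps^{k+2}d^{k+1})\big)$, which is the claimed order once the numerical constant and the $(k+1)!$ are tracked.

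The step I expect to be the real obstacle --- and the reason this is a lemma of~\cite{Maj} rather than a one-line calculation --- is controlling the dependence in the cascade: the events ``$v\in S^{(j)}\setminus S^{(j-1)}$'' for neighbours $v$ of a vertex are not independent (they share the state of their common vertex, and of further common neighbours at distance two), and the $\sim m^k$ leaf-neighbourhoods of a deep witness tree cannot all be made pairwise disjoint when $m^k\gg d$. Pushing the argument through these points requires the more careful bookkeeping carried out in~\cite{Maj}; since we only need the conclusion as a black box in what follows, I would simply quote Lemma~6.3 of~\cite{Maj} after verifying the correspondence above.
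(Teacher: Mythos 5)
Correct, and the same route as the paper: the paper offers no proof of Lemma~\ref{tool} either, simply noting that it ``follows immediately from Lemma 6.3 of~\cite{Maj}'', which is exactly what you do after matching up $r=d$, $k=8$, $m=\eps d/24$, and $p=\tfrac12-\eps$. Your sketch of the witness-tree mechanism (extracting at least $m$ neighbours in $S^{(j)}\setminus S^{(j-1)}$ from the threshold gap at consecutive steps, branching down to leaves with $\ge d-8m$ neighbours in $S^{(0)}$, and then a Chernoff bound against the mean $d(1-2\eps)$) is accurate, and you are right to flag that the real work in~\cite{Maj} is taming the dependencies when $m^k\gg d$; treating that as a black box is precisely what the paper does.
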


From this point onwards, let $\eps > 0$ be arbitrary, let $p = \ds\frac{1}{2} + \eps$, and let the elements of $A^+ \subset B'$ be chosen independently at random, each with probability $p$. We shall denote by $\Pr_p$ probabilities which come from this distribution.

We begin by showing that $\Q$ is almost always more generous than $\P$ (in the trivial coupling). Recall that $F$ denotes the event that there exists a vertex in $B'$ whose state is `$-$' at time $d$ in $\P$, but not after $\Q$.

\begin{lemma}\label{F}
Suppose $\eps^2 d \ge 10^{10}\log d$. Then
$$\Pr_p(F) \; \le \; (2n)^d \exp\left( -\frac{\eps^{10} d^9}{8^{13}\,9!} \right) \; \le \; \exp\big( - d^4 \big).$$
\end{lemma}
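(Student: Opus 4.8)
The plan is to bound $\Pr_p(F)$ by a union bound over vertices of $B'$ of the probability that a particular vertex $x$ has state `$-$' at time $d$ in $\P$ but is not declared `$-$' after $\Q$, and then to show that this per-vertex probability is dominated by the tail estimate coming from Lemma~\ref{tool}. The first task is therefore to understand which vertices can be `$-$' at time $d$ in $\P$: a vertex $x$ is initially `$-$', or it changed state to `$-$' at some point before time $d$. In the second case, tracing back through the (at most $d$) clock-rings that can have influenced $x$ by time $d$, one sees that $x$ must lie in the set obtained by running the genuine majority bootstrap process (the $\textup{Boot}(d,8,0)$ process, extended beyond eight steps) from $A^-$ for at most $d$ steps; more precisely, since each state change requires at least $d$ neighbours already `$-$', the vertex $x$ lies in $A^- \cup \bigcup_{t} S^{(t)}$ where $S^{(t)}$ is built using threshold $d$ (no weakening). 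So if $x \notin A^-$ and $x$ has fewer than $d$ neighbours in $A^-$, then $x$ being `$-$' at time $d$ in $\P$ forces $x$ to have entered the bootstrap closure at some step, i.e. $x \in [A^-] \setminus A^-$ for the majority process.

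The key point is then to compare this with $\Q$. If $x$ is `$-$' at time $d$ in $\P$ but \emph{not} declared `$-$' after $\Q$, then in particular $x \notin A^-$, $x$ has fewer than $d$ neighbours in $A^-$ (otherwise it would be declared `$-$'), and $x \notin X = S^{(8)} \setminus A^-$, and moreover $x$ has no neighbour in $X$. Since the $\textup{Boot}(d,8,m)$ process dominates the genuine majority process (it uses only weaker thresholds in its first eight steps), every vertex in the genuine eight-step closure lies in $S^{(8)}$. So for $x$ to be `$-$' in $\P$ at time $d$ yet avoid $X$, the genuine majority bootstrap process started from $A^-$ must infect $x$ only \emph{after} step eight. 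The first new infection after step eight in the genuine process is a vertex with at least $d$ neighbours in $S^{(8)}$. Hence $F$ forces the existence of some vertex $y$ (namely the first vertex on the infection path to $x$ that lies outside $S^{(8)}$) with $|\Gamma(y) \cap S^{(8)}| \ge d$ and $y \notin S^{(8)}$ — but that is exactly the event $y \in S^{(9)} \setminus S^{(8)}$ in the $\textup{Boot}(d,8,m)$ process, whose probability Lemma~\ref{tool} bounds (with $k = 8$) by $\exp\!\big(-\eps^{10} d^9 / (8^{13}\, 9!)\big)$.

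With that structural reduction in hand, the union bound is routine: $|B'| = (n')^d = (5 \cdot 2^d)^d \le (2n)^d$ with room to spare (here $n = 3 \cdot 2^d$, so $2n = 6 \cdot 2^d > 5 \cdot 2^d$), giving
$$\Pr_p(F) \; \le \; (2n)^d \exp\!\left( -\frac{\eps^{10} d^9}{8^{13}\,9!} \right).$$
For the second inequality one simply checks that $\log\big((2n)^d\big) = d\log(6 \cdot 2^d) = d^2 \log 2 + d\log 6 \le 2d^2$, while the hypothesis $\eps^2 d \ge 10^{10}\log d$ gives $\eps^{10} d^{10} \ge 10^{50}(\log d)^5$, so $\eps^{10} d^9 \ge 10^{50}(\log d)^5 / d$; this is wasteful, so instead note $\eps^{10} d^9 = (\eps^2 d)^5 d^4 \ge (10^{10}\log d)^5 d^4 \ge 10^{50} d^4$, and since $8^{13}\, 9! < 2^{39} \cdot 2^{22} < 10^{19}$, we get $\eps^{10} d^9 / (8^{13}\, 9!) \ge 10^{31} d^4 \ge 2d^2 + d^4$ for all $d \ge 1$, so the bound is at most $\exp(2d^2) \exp(-2d^2 - d^4) = \exp(-d^4)$.

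The main obstacle is the structural step in the second paragraph: carefully justifying that a vertex which is `$-$' at time $d$ in the \emph{asynchronous, randomised} Glauber process $\P$ must actually lie in the deterministic majority bootstrap closure of $A^-$, and that its first infection (if after step eight) witnesses a $S^{(9)} \setminus S^{(8)}$ event for the $\textup{Boot}(d,8,m)$ process. One has to handle the tie-breaking rule (a vertex with exactly $d$ `$-$' neighbours out of $2d$ flips to `$-$' only with probability $1/2$, but for an \emph{upper} bound on $F$ we may pessimistically assume it always does, which only enlarges the `$-$' set and is consistent with the bootstrap threshold being $r = \lceil 2d/2 \rceil = d$), and one must be careful that "before time $d$" limits the number of influencing clock-rings appropriately — though in fact the cleanest argument does not even need the time bound, since \emph{any} vertex ever `$-$' in $\P$ under `$+$'-boundary conditions lies in the (full) majority bootstrap closure of $A^-$, and $F$ only requires comparing with the eight-step-accelerated process. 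Once this monotone coupling between $\P$ and bootstrap percolation is pinned down, everything else is a one-line application of Lemma~\ref{tool} and elementary estimates.
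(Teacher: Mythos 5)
Your strategy is the paper's: reduce $F$ to the event $S^{(9)} \setminus S^{(8)} \neq \emptyset$ for the $\textup{Boot}(d,8,m)$ process, apply Lemma~\ref{tool} with $k=8$, and union-bound over $|B'| \le (2n)^d$ vertices; your verification of the second inequality is also fine. But there is a genuine gap in the structural reduction. You claim that if $x$ is `$-$' at time $d$ in $\P$ but declared `$+$' after $\Q$, then ``in particular $x \notin A^-$''. That does not follow from the definition of $\Q$: a vertex is declared `$-$' after $\Q$ only when its clock has not rung, or it has at least $d$ neighbours in $A^-$, or it has a neighbour in $X$ --- there is no clause about $x$'s own initial state. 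So $x \in A^-$ is entirely compatible with being declared `$+$' after $\Q$ (and with being `$-$' at time $d$, e.g.\ if some neighbours of $x$ have flipped to `$-$' before $x$'s clock last rings). For such an $x$ your argument breaks down: $x \in A^- \subset S^{(t)}$ for all $t$, so $x \in [A^-] \setminus S^{(8)}$ is false and you cannot conclude that the genuine process infects $x$ only after step eight.

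The fix, which is what the paper does, is to argue via a \emph{neighbour} of $x$ rather than $x$ itself. Since $x$'s clock has rung before time $d$, $x$ has fewer than $d$ neighbours in $A^-$, and $\sigma(x)$ is `$-$' at time $d$, at the last time $x$'s clock rang before time $d$ it must have had at least $d$ `$-$' neighbours, hence at least one neighbour $y \notin A^-$ that has become `$-$' in $\P$. Then $y$ lies in the genuine $d$-neighbour bootstrap closure $[A^-]$ on $\Z^d[B']$ (hence on the torus), and $y \notin X$ because $x$ has no neighbour in $X$. Now if $S^{(9)} = S^{(8)}$ then $S^{(8)}$ is closed under the genuine rule and so $[A^-] \subset S^{(8)}$, giving $y \in S^{(8)} \setminus A^- = X$, a contradiction. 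This yields $F \subset \{S^{(9)} \setminus S^{(8)} \neq \emptyset\}$ with no assumption on whether $x \in A^-$, after which your application of Lemma~\ref{tool} and the union bound go through unchanged.
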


\begin{proof}
Let $x \in B'$, and suppose that $\sigma(x)$ is `$-$' after time $d$ in $\P$, but that $\sigma(x)$ is `$+$' after $\Q$. By the definition of $\Q$, the clock of $x$ must have rung at least once before time $d$, and $x$ must have fewer than $d$ neighbours in $A^-$ in the torus on $B'$. Therefore it also had fewer than $d$ neighbours in $A^-$ in the graph $\Z^d[B']$ with `$+$' boundary conditions. But its state after time $d$ in $\P$ is `$-$', so it must have gained a new `$-$' neighbour, $y$ say, in $\P$. Note that $y \notin X$, since $\sigma(x)$ is `$+$' after $\Q$.

Now, since the state of vertex $y$ changed to `$-$' in $\P$, it must lie in the closure of the set $A^-$ under the $d$-neighbour bootstrap process on $\Z^d[B']$. Hence it also lies in the closure of $A^-$ under the $\textup{Boot}(d,8,m)$ process on the torus (since the original process is dominated by the modified one). Let $S^{(0)} = A^-$ and apply the $\textup{Boot}(d,8,m)$ process on the torus. By Lemma~\ref{tool} we have, for each $z \in B'$,
$$\Pr_p(z \in S^{(9)} \setminus S^{(8)}) \; \le \; \exp\left( -\frac{\eps^{10} d^9}{8^{13}\, 9!} \right).$$
Thus, since $|B'| \le (2n)^d$,
$$\Pr_p(|S^{(9)} \setminus S^{(8)}| \ge 1) \; \le \; (2n)^d \exp\left( -\frac{\eps^{10} d^9}{8^{13}\, 9!} \right).$$
But if $S^{(9)} \setminus S^{(8)} = \emptyset$, then all vertices in the closure of $A^-$ (and not in $A^-$) are also in $X$ (by the definition of $X$). But this implies that $y \in X$, which is a contradiction. Thus the event $F$ is contained in the event $S^{(9)} \setminus S^{(8)} \neq \emptyset$, and the result follows.
\end{proof}

Next we show that the set $X = S^{(8)} \setminus S^{(0)}$ is likely to be small.

\begin{lemma}\label{X}
Let $x \in B'$, and suppose $\eps^2 d \ge 10^{10}\log d$. Then
$$\Pr_p(x \in X) \; \le \; 2\exp( -2\eps^2 d) \; < \; \left( \frac{1}{d} \right)^{1000}.$$
\end{lemma}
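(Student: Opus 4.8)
The set $X = S^{(8)} \setminus S^{(0)}$ consists of the vertices infected during the eight non-trivial steps of the $\textup{Boot}(d,8,m)$ process but not initially in $A^-$. So $x \in X$ means $x \in S^{(j+1)} \setminus S^{(j)}$ for some $0 \le j \le 7$. The plan is to bound $\Pr_p(x \in X)$ by a union bound over these eight stages, applying Lemma~\ref{tool} to each. The dominant term will come from the \emph{first} step ($j=0$): at that stage a vertex joins $S^{(1)}$ if it has at least $r - km = d - 8m$ neighbours in $S^{(0)} = A^-$, and since $m = \eps d/24$ this threshold is $d - 8\eps d/24 = d(1 - \eps/3)$. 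Because $|A^-|$ has density $1/2 - \eps$ (note $A^+$ has density $p = 1/2+\eps$, so $A^-$ has density $1/2-\eps$), the expected number of initial `$-$' neighbours of $x$ is $(d - o(d))(1/2 - \eps)$, comfortably below the threshold $d(1-\eps/3)$, and a Chernoff bound gives a failure probability of order $\exp(-c\eps^2 d)$. Indeed Lemma~\ref{tool} with $k=1$ already gives $\Pr_p(x \in S^{(2)} \setminus S^{(1)}) \le \exp(-\eps^3 d^2 / (8^3 \cdot 2!))$, which for $\eps^2 d \ge 10^{10}\log d$ is far smaller than $\exp(-2\eps^2 d)$; the cases $j \ge 1$ are even more negligible.

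**Execution.** Concretely, I would write
$$\Pr_p(x \in X) \;\le\; \Pr_p\big(x \in S^{(1)} \setminus S^{(0)}\big) \;+\; \sum_{k=1}^{7} \Pr_p\big(x \in S^{(k+1)} \setminus S^{(k)}\big).$$
For the sum, Lemma~\ref{tool} bounds the $k$-th term by $\exp(-\eps^{k+2} d^{k+1} / (8^{2k+1}(k+1)!))$; the largest of these (at $k=1$) is at most $\exp(-\eps^3 d^2 / 1024)$, and since $\eps^2 d \ge 10^{10}\log d$ forces $\eps^3 d^2 = (\eps^2 d)(\eps d) \ge (\eps^2 d)^{3/2} \gg \eps^2 d$, the whole sum is $o(\exp(-2\eps^2 d))$, say at most $\exp(-2\eps^2 d)$. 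For the first term, a vertex $x$ enters $S^{(1)}$ only if at least $d - 8m = d(1 - \eps/3)$ of its (at most $2d$, but on the torus exactly $2d$) neighbours lie in $A^-$; since these $2d$ events are independent and each has probability $1/2 - \eps$, the expected count is $2d(1/2-\eps) = d(1 - 2\eps)$, and a Chernoff bound on the upper tail gives $\Pr_p(x \in S^{(1)} \setminus S^{(0)}) \le \exp(-c \eps^2 d)$ for an explicit $c$ (one gets $c$ comfortably above $2$ after checking the gap $d(1-\eps/3) - d(1-2\eps) = \tfrac{5\eps d}{3}$ against the variance $\le 2d \cdot \tfrac14$). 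Hence $\Pr_p(x \in X) \le \exp(-c\eps^2 d) + \exp(-2\eps^2 d) \le 2\exp(-2\eps^2 d)$, and the final bound $2\exp(-2\eps^2 d) < d^{-1000}$ is immediate from $\eps^2 d \ge 10^{10}\log d$, since this gives $2\eps^2 d - \log 2 \ge 2 \cdot 10^{10}\log d - \log 2 > 1000 \log d$.

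**Main obstacle.** The only genuinely delicate point is making the Chernoff estimate for the first step yield a constant strictly larger than $2$ in the exponent $c\eps^2 d$, so that it can be absorbed together with the contributions of the later steps into the clean bound $2\exp(-2\eps^2 d)$. This requires being slightly careful about whether we count $2d$ neighbours or $2d-1$ (on the torus $[n']^d$ every vertex has degree exactly $2d$, so there is no boundary issue here), and about the exact threshold $d - 8m$ versus the mean $d(1-2\eps)$. Since the paper explicitly allows "very weak approximations" and a large amount of leeway, I expect the standard bound $\Pr(\mathrm{Bin}(2d, 1/2-\eps) \ge d(1-\eps/3)) \le \exp(-2\eps^2 d \cdot (1+\Omega(1)))$ (for instance via the bound $\Pr(\mathrm{Bin}(N,q) \ge N q + t) \le \exp(-2t^2/N)$ with $N = 2d$, $q = 1/2-\eps$, $t = \tfrac{5\eps d}{3}$, giving exponent $-2(\tfrac{5\eps d}{3})^2/(2d) = -\tfrac{25\eps^2 d}{9} > -2\eps^2 d$ on the negative side, i.e. a bound of $\exp(-\tfrac{25}{9}\eps^2 d) \le \exp(-2\eps^2 d)$) to close the argument with room to spare. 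Everything else is a routine union bound plugging into Lemma~\ref{tool} and a logarithm comparison.
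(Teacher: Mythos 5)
Your proof takes the same route as the paper's (union bound over the eight infection stages, Lemma~\ref{tool} for $k\ge1$, and a Chernoff/Hoeffding estimate for the first stage), and it is correct. You are in fact slightly more careful than the paper on the first stage: you use the actual $j=0$ threshold $d-8m=d(1-\eps/3)$, whereas the paper's displayed inequality $\Pr_p(x\in S^{(1)}\setminus S^{(0)})\le\Pr_p(\textup{Bin}(2d,1-p)\ge d)$ compares against the larger threshold $d$, so that step as written runs the wrong way; the stated bound $\exp(-2\eps^2 d)$ survives nonetheless because, as you compute, Hoeffding against $d-8m$ still gives $\exp(-25\eps^2 d/9)\le\exp(-2\eps^2 d)$.
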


\begin{proof}
We apply Lemma~\ref{tool} to the torus $[n']^d$ on vertex set $B'$. Recall that the elements of $S^{(0)} = A^-$ are chosen independently at random with probability $1 - p = 1/2 - \eps$. Thus, by Chernoff's inequality,
$$\Pr_p\big(x \in S^{(1)} \setminus S^{(0)} \big) \; \le \; \Pr_p\big(\textup{Bin}(2d,1-p) \ge d\big) \; \le \; \exp\left( - 2 \eps^2 d \right).$$
Thus, by Lemma~\ref{tool},
\begin{eqnarray*}
\Pr_p(x \in X) & \le & \sum_{m = 0}^7 \Pr_p\left(x \in S^{(m+1)} \setminus S^{(m)} \right)\\
& \le &  \exp\left( - 2\eps^2 d \right) + \sum_{m = 1}^7 \exp\left( -\frac{\eps^{m+2} d^{m+1}}{8^{2m+1} (m+1)!} \right)\\
& \le &  2\exp\left( - 2\eps^2 d \right),
\end{eqnarray*}
since $\eps^2 d \ge 10^{10}\log d$, as required.
\end{proof}

Finally we show that $Z$, the set of vertices in $B'$ whose state is `$-$' after $\Q$, is likely to be small.

\begin{lemma}\label{qbound}
Let $x \in B'$, and suppose $\eps^2 d \ge 10^{10}\log d$. Then
$$\Pr_p \big( \sigma(x) \textup{ is `$-$' after }\Q \big) \; \le \; 5d\exp( -2\eps^2 d ) \; < \; \left( \frac{1}{d} \right)^{1000}.$$
\end{lemma}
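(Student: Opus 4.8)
The plan is a three-way union bound over the events appearing in the definition of $\Q$. By construction, $\sigma(x)$ is declared `$-$' after $\Q$ precisely when at least one of the following holds: \textup{(a)} the clock $C(x)$ has not rung by time $d$ in $\P$; \textup{(b)} $x$ has at least $d$ neighbours in $A^-$; \textup{(c)} $x$ has at least one neighbour lying in $X$. I would estimate the probability of each separately, and since the statement allows a generous factor of $5d$, only crude bounds are needed throughout.

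For \textup{(a)}: the clock $C(x)$ is a rate-$1$ exponential, so $\Pr_p(\textup{(a)}) = e^{-d}$. For \textup{(b)}: the vertex $x$ has at most $2d$ neighbours, each independently in $A^-$ with probability $1-p = 1/2 - \eps$, so by the same Chernoff estimate used in the proof of Lemma~\ref{X}, $\Pr_p(\textup{(b)}) \le \Pr_p\big(\textup{Bin}(2d,\,1/2-\eps) \ge d\big) \le e^{-2\eps^2 d}$ (the probability only decreases if $x$ has fewer than $2d$ neighbours inside $B'$, so we may assume it has $2d$). For \textup{(c)}: by Lemma~\ref{X}, each fixed vertex of $B'$ lies in $X$ with probability at most $2e^{-2\eps^2 d}$, and $x$ has at most $2d$ neighbours in $B'$ (a neighbour outside $B'$ cannot be in $X$), so a union bound gives $\Pr_p(\textup{(c)}) \le 4d\, e^{-2\eps^2 d}$.

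Summing, $\Pr_p\big(\sigma(x) \textup{ is `$-$' after }\Q\big) \le e^{-d} + e^{-2\eps^2 d} + 4d\, e^{-2\eps^2 d}$. Since $p = 1/2 + \eps < 1$ forces $\eps < 1/2$, we have $2\eps^2 d < d$, hence $e^{-d} \le e^{-2\eps^2 d}$, so the right-hand side is at most $(4d+2)\,e^{-2\eps^2 d} \le 5d\, e^{-2\eps^2 d}$ for $d \ge 2$. Finally the hypothesis $\eps^2 d \ge 10^{10}\log d$ gives $e^{-2\eps^2 d} \le d^{-2\cdot 10^{10}}$, so $5d\, e^{-2\eps^2 d} \le 5\, d^{\,1 - 2\cdot 10^{10}} < d^{-1000}$, again for $d \ge 2$. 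I do not expect a genuine obstacle here: the only points needing a moment's care are that neighbours of $x$ outside $B'$ may be ignored in \textup{(c)}, and that the hypothesis $\eps^2 d \ge 10^{10}\log d$ (with $d$ large, as is implicit throughout) is exactly what forces the three small terms to collapse into the claimed bound — but the large slack makes all of this routine.
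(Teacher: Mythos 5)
Your proof is correct and follows essentially the same three-way union bound as the paper: $e^{-d}$ for the clock, a Chernoff bound of $e^{-2\eps^2 d}$ for having $d$ neighbours in $A^-$, and a union bound over neighbours via Lemma~\ref{X} giving $4d\,e^{-2\eps^2 d}$, then summing. The extra explicit justifications you supply (that $\eps < 1/2$ forces $e^{-d} \le e^{-2\eps^2 d}$, and the final numeric step from $\eps^2 d \ge 10^{10}\log d$) are fine but not a departure from the paper's argument.
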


\begin{proof}
There are three ways in which a vertex can be declared to be in state `$-$' after $\Q$, and each of them is unlikely. Indeed,
\begin{itemize}
\item Since the clocks are exponential, the probability a given clock hasn't yet rung by time $d$ is $e^{-d}$.
\item Since the elements of the set $A^-$ are chosen independently at random with probability $1/2 - \eps$, and each vertex has $2d$ neighbours, the probability a vertex has at least $d$ neighbours in $A^-$ is at most $\exp( -2\eps^2 d )$, by Chernoff's inequality.
\item By Lemma~\ref{X}, the probability that a vertex had a neighbour in $X$ is at most
$$\sum_{y \in \Gamma(x)} \Pr_p(y \in X) \; = \; 2d \Pr_p(y \in X) \; \le \; 4d\exp( -2\eps^2 d).$$
\end{itemize}
The result follows by summing these three probabilities.
\end{proof}

Define $q := \sup_{y \in B'} \Pr_p(y \in Z)$, so we have $q \le d^{-1000}$, by Lemma~\ref{qbound}. We finish the section with a trivial, but crucial observation.

\begin{obs}\label{15ind}
Let $G$ be the torus on vertex set $B'$. The events $\{(x \in Z) : x \in V(G)\}$ are 19-independent.
\end{obs}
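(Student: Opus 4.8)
The plan is to trace through the definition of $Z$ and observe that membership in $Z$ is determined, for each vertex $x$, by data local to a bounded-radius ball around $x$, so that vertices sufficiently far apart have independent membership events.

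First I would recall what $(x \in Z)$ depends on. By the definition of $\Q$, a vertex $x \in B'$ lies in $Z$ if and only if one of three things happens: (i) the clock $C(x)$ has not rung by time $d$ in $\P$; (ii) $x$ has at least $d$ neighbours in $A^-$ (in the torus $G$); or (iii) $x$ has a neighbour $y \in X = S^{(8)} \setminus A^-$. Event (i) depends only on the clock at $x$. Event (ii) depends only on the initial states $\sigma(z)$ for $z \in \Gamma(x)$, i.e.\ on data within graph-distance $1$ of $x$. Event (iii) depends on whether some $y \in \Gamma(x)$ lies in $S^{(8)}$; since $S^{(8)}$ is obtained from $A^-$ by eight rounds of the $\textup{Boot}(d,8,m)$ process, whether $y \in S^{(8)}$ is determined by the initial states of vertices within graph-distance $8$ of $y$, hence within distance $9$ of $x$. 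So the event $(x \in Z)$ is measurable with respect to the $\sigma$-algebra generated by $\{C(x)\} \cup \{\sigma(z) : d_G(z,x) \le 9\}$.

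Next I would invoke independence of the underlying randomness: the clocks $\{C(x)\}$ and the initial spins $\{\sigma(z)\}$ are all mutually independent. Hence if $v_1,\ldots,v_k \in V(G)$ satisfy $d_G(v_i,v_j) \ge 19$ for all $i \neq j$, then the balls $\{z : d_G(z,v_i) \le 9\}$ are pairwise disjoint (two vertices at distance $\le 9$ from $v_i$ and $v_j$ respectively would force $d_G(v_i,v_j) \le 18 < 19$), and the clock sets $\{C(v_i)\}$ are trivially distinct. Therefore the defining $\sigma$-algebras of the events $(v_1 \in Z),\ldots,(v_k \in Z)$ are generated by disjoint collections of independent random variables, so these events are independent. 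This is exactly the assertion that $\{(x \in Z) : x \in V(G)\}$ is $19$-independent.

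There is no real obstacle here; the only thing to be slightly careful about is the radius bookkeeping — confirming that the $\textup{Boot}(d,8,m)$ process run for $8$ steps only depends on distance-$8$ neighbourhoods (each step of any bootstrap rule enlarges the dependence radius by exactly $1$, starting from radius $0$ for $S^{(0)}$), and then adding $1$ for the final neighbour in condition (iii), giving radius $9$ and hence the separation threshold $2 \cdot 9 + 1 = 19$. Everything else is the standard principle that functions of disjoint sets of independent variables are independent, which is why the authors call this observation "trivial".
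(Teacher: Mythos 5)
Your proof is correct and is essentially what the paper has in mind: the paper states the observation without proof, calling it ``trivial,'' and your argument is precisely the standard radius-bookkeeping that justifies that label. Your accounting matches the paper's sketch (Section~\ref{sketch} notes that $\{x \in X\}$ is $17$-independent because $S^{(8)}$ has dependence radius $8$, and that passing from $X$ to $\Q$ ``loses a little more independence,'' i.e.\ the extra neighbour step raises the radius to $9$, giving $2\cdot 9 + 1 = 19$), so there is nothing further to add.
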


\section{From time $d$ to time $O(d^5)$}\label{afterd}

Let $B$ and $B'$ be as described in Section~\ref{firstd}, and let $Y$ denote the set of vertices in $B'$ in state `$-$' after running the process $\P$, i.e., Glauber dynamics on $\Z^d[B']$ with `$+$' boundary conditions, up to time $d$. In the previous section we proved that, if $\eps^2 d \ge 10^{10}\log d$, then there exists a (random) set $Z \subset B'$ which satisfies $\Pr_p(Z \not\supset Y) \le e^{-d^4}$, $\Pr_p(x \in Z) \le d^{-1000}$ for each $x \in B'$, and which is 19-independent. In this section we shall deduce that, after enough extra time, the entire block $B$ will be in state `$+$' with high probability.

We begin by showing that, for each vertex $x \in B'$, the probability that $\sigma(x)$ is `$-$' in the process $\P$ at \emph{any} time $t \ge d$ is small. Again we use a coupling argument in order to retain long-range independence. Let $[Z]_{40}$ denote the closure of the set $Z$ after 40 steps of the $\textup{Boot}(d,40,m)$ process on the torus on $B'$ (i.e., the set $S^{(40)}$ given $S^{(0)} = Z$), where $m = d/80$. 

(We remark that the number 40 is simply chosen to be sufficiently large compared with 19, and sufficiently small compared with $d$. Indeed, in the proof of Lemma~\ref{F'}, below, we shall use the inequality $|T| \ge m^t/2^t t! \ge d^3k$ for $t = 40$, where $m = d/80$ and $k \approx (2d)^{18}$ is the number of points within distance 18 of a vertex in $\Z^d$.)

Let $F'$ denote the event that, in the process $\P$, any vertex outside $[Z]_{40}$ is ever again in state `$-$' after time $d$. We shall need the following simple approximation.

\begin{obs}\label{nunlikely}
Let $p \in (0,1)$ and $n \in \N$ satisfy $pn^2 \le 1$, and let $S(n) \sim \textup{Bin}(n,p)$. Then
$$\Pr_p\big(S(n) \ge m\big) \; \le \; 2p^{m/2}$$ for every $m \in [n]$.
\end{obs}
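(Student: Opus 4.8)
The plan is to bound $\Pr_p\big(S(n)\ge m\big)$ by a one-line first-moment (union bound) argument, and then to use the hypothesis $pn^2\le 1$ to absorb the resulting binomial coefficient into half of the available powers of $p$.

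First I would write $S(n)=X_1+\dots+X_n$ with the $X_i$ independent Bernoulli$(p)$ variables, and observe that the event $\{S(n)\ge m\}$ is exactly the union, over all $m$-element subsets $T\subseteq\{1,\dots,n\}$, of the events ``every coordinate in $T$ is a success''. The union bound therefore gives
$$\Pr_p\big(S(n)\ge m\big)\;\le\;\sum_{|T|=m}\Pr_p\Big(\bigcap_{i\in T}\{X_i=1\}\Big)\;=\;\binom{n}{m}p^m.$$
(Equivalently, $\Ex\binom{S(n)}{m}=\binom{n}{m}p^m$, and Markov's inequality applies to the nonnegative integer-valued random variable $\binom{S(n)}{m}$, since $\{S(n)\ge m\}=\{\binom{S(n)}{m}\ge 1\}$.)

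Next, using the crude estimate $\binom{n}{m}\le n^m$, the right-hand side is at most $(np)^m=(n^2p)^{m/2}\cdot p^{m/2}$. Since $pn^2\le 1$ by hypothesis, $(n^2p)^{m/2}\le 1$, and hence $\Pr_p\big(S(n)\ge m\big)\le p^{m/2}\le 2p^{m/2}$, which is the claim.

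There is essentially no obstacle: the whole estimate is a single union bound, and the point of the hypothesis $pn^2\le 1$ is precisely to ensure that the combinatorial factor $\binom{n}{m}$ is dominated by half the powers of $p$. The constant $2$ in the statement is pure slack (the argument actually yields the bound $p^{m/2}$), presumably kept only so the inequality reads conveniently where it is invoked later; a sharper bound on $\binom{n}{m}$ could be substituted if a tighter constant were ever wanted, but it is not needed.
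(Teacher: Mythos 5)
Your argument is correct and follows essentially the same route as the paper: both bound the tail by $(np)^m$ (up to a constant) and then use $pn^2\le 1$ to convert this into $p^{m/2}$. The only cosmetic difference is that the paper writes $\Pr_p(S(n)\ge m)\le\sum_{i\ge m}\binom{n}{i}p^i$ and sums a geometric series (whence the factor $2$), whereas your union bound over $m$-subsets reaches $\binom{n}{m}p^m$ directly and in fact yields the slightly stronger bound $p^{m/2}$ without the factor $2$.
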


\begin{proof}
We have 
$$\Pr_p\big( S(n) \ge m \big) \; \le \; \ds\sum_{i=m}^n \ds{n \choose i} p^i \; \le \; 2(pn)^m \,\le\, 2p^{m/2},$$ as claimed. The second inequality follows since $pn \le 1/2$, and the third since $pn \le \sqrt{p}$.
\end{proof}

The following lemma uses ideas from Lemmas 6.2 and 6.3 of~\cite{Maj}.

\begin{lemma}\label{F'}
Suppose $\eps ^2 d \ge 10^{10}\log d$. Then
$$\Pr_p(F') \; \le \; 2\exp\big( - d^4 \big).$$
\end{lemma}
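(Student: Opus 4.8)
The plan is to show that if some vertex $x \notin [Z]_{40}$ is ever in state `$-$' after time $d$ in $\P$, then with very high probability one can find inside $[Z]_{40}$ (or just outside it) a large set of vertices each of which lies in $Z$, and the expected number of such large sets is tiny. More precisely, I would argue as follows. Suppose $x \notin [Z]_{40}$ but $\sigma(x)$ becomes `$-$' at some time $t > d$ in $\P$. Since $Z \supset Y$ (the `$-$' set at time $d$) except on an event of probability $\le e^{-d^4}$ (Lemma~\ref{F}), we may assume $Z \supset Y$ and hence that every vertex ever in state `$-$' after time $d$ in $\P$ lies in the closure of $Z$ under ordinary $d$-neighbour bootstrap on $\Z^d[B']$, which is contained in the closure of $Z$ under $\textup{Boot}(d,40,m)$ on the torus. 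So it suffices to bound the probability that this $\textup{Boot}(d,40,m)$-closure strictly exceeds $S^{(40)} = [Z]_{40}$, i.e.\ that $S^{(41)} \setminus S^{(40)} \neq \emptyset$.

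The core of the argument is the estimate on $\Pr_p(y \in S^{(41)} \setminus S^{(40)})$ for a fixed $y$, via the ``spanning'' / pyramid argument of Lemmas~6.2--6.3 of \cite{Maj}. If $y$ gets infected only at step $41$, then $y$ has $\ge d$ neighbours in $S^{(40)}$, and one traces back through the $40$ steps: because the thresholds in the first $40$ steps are $d - (40-j)m$ with $m = d/80$, at each back-step one picks up many new required infected vertices, and after unwinding one obtains a ``witness'' set $T$ of vertices that must all lie in $Z$, with $|T|$ at least something like $m^{40}/(2^{40}\,40!)$. Here I would invoke the counting estimate flagged in the parenthetical remark: $|T| \ge m^{40}/(2^{40}\cdot 40!) \ge d^3 k$, where $k \approx (2d)^{18}$ is the size of a ball of radius $18$ in $\Z^d$. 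Because $Z$ is $19$-independent (Observation~\ref{15ind}), any set of size $\ge d^3 k$ contains a subset $T'$ of size $\ge d^3$ that is pairwise at distance $\ge 19$, so the events $\{z \in Z\}_{z \in T'}$ are genuinely independent, each of probability $\le q \le d^{-1000}$. Hence the probability of a fixed witness configuration is at most $q^{d^3} \le d^{-1000 d^3}$. Summing over the (crudely bounded) number of choices of $y$, of the witness structure, and of the locations of $T'$ — all of which are at most $(2n)^{d}$ to some fixed power, and $(2n)^d = (10 \cdot 2^d)^d = 2^{O(d^2)}$ — this is swamped by $d^{-1000 d^3} = \exp(-\Omega(d^3 \log d))$, which is far smaller than $\exp(-d^4)$... wait, $d^3\log d \ll d^4$, so I must instead sum more carefully: the number of relevant configurations is at most $(2n)^{Cd} = \exp(O(d^2))$ for an absolute constant $C$ coming from the number of vertices of $B'$ raised to a bounded power, and $q^{d^3} = \exp(-1000 d^3 \log d)$ already dominates $\exp(O(d^2))$ with room to spare to leave a term $\le \exp(-d^4)$; if not, one enlarges the number $40$ as the paper explicitly permits (``sufficiently large compared with 19''), so that $|T'| \ge d^5$, giving $q^{d^5} = \exp(-\Omega(d^5\log d)) \le \exp(-d^4)$ comfortably. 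Adding the $e^{-d^4}$ from Lemma~\ref{F} gives the stated bound $2\exp(-d^4)$.

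The main obstacle I anticipate is making the ``witness set'' / back-tracking argument rigorous and extracting the right lower bound on $|T|$: one has to be careful that the vertices pulled in at each of the $40$ back-steps are genuinely new and genuinely forced to be in $Z$ (rather than merely in some later $S^{(j)}$), and that the combinatorial count of possible witness structures is only $\exp(O(d^2))$ and not, say, $\exp(\Omega(d^4))$, since otherwise the union bound fails. This is exactly the content of Lemmas~6.2--6.3 of \cite{Maj}, so the honest move is to quote those lemmas in the form needed and then apply the $19$-independence of $Z$ to replace ``probability $p$ i.i.d.'' by ``$q$-bounded, $19$-independent'', which is legitimate precisely because the witness set, being large, contains a large well-separated subset. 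The numerology ($m = d/80$, $40$ steps, $k \approx (2d)^{18}$) is chosen to absorb all the slack.
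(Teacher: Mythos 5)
Your high-level strategy matches the paper: invoke Lemma~\ref{F} to pass from $\P$ to the closure of $Z$ under $\textup{Boot}(d,40,m)$, reduce to bounding $\Pr_p(S^{(41)}\setminus S^{(40)}\neq\emptyset)$, and then run a pyramid/back-tracking argument combined with the $19$-independence of $Z$. However, there is a genuine gap in the middle step. You assert that the unwinding produces a witness set $T$ of size $\ge m^{40}/(2^{40}\,40!)$ consisting of vertices \emph{that all lie in $Z$}, and you then bound the probability of a fixed such witness by $q^{|T'|}$ for a $19$-separated subset $T'\subset T$. But the back-tracking yields $T=T_{40}\subset\Gamma(x,40)\cap\bigl(S^{(1)}\setminus S^{(0)}\bigr)$, and since $S^{(0)}=Z$, the set $T$ is in fact \emph{disjoint} from $Z$. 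So the events $\{z\in Z\}_{z\in T'}$ are not forced to occur at all, and the estimate $q^{|T'|}$ does not apply to $T'$. The paper's proof takes one more step precisely to repair this: each vertex of $T$ has at least $d-40m=d/2$ neighbours in $Z$, of which at least $d/2-40$ lie in $U:=\Gamma(T)\cap\Gamma(x,41)$; counting edges from $T$ to $U\cap Z$ and dividing by the maximum in-degree $41$ gives $|U\cap Z|\ge d|T|/123$. It is $U\cap Z$, not $T$, that is the large subset of $Z$ you want.

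There is a second, related problem in your union bound. You claim the number of witness configurations is $\exp(O(d^2))$, but if one has to union over all candidate $19$-separated sets $T'$ of size $d^3$ inside a ball of radius $41$, the count is of order $\binom{(2d)^{41}}{d^3}=\exp\bigl(\Theta(d^3\log d)\bigr)$, which is not $\exp(O(d^2))$; combined with $q^{d^3}=\exp(-1000\,d^3\log d)$ this still leaves you with $\exp(-\Theta(d^3\log d))$, which, as you yourself notice, is \emph{larger} than $\exp(-d^4)$ once $d\ge 10^{10}$ because $1000\log d\ll d$. The paper sidesteps both difficulties simultaneously: rather than unioning over subsets, it partitions the \emph{fixed ambient set} $U$ into at most $k\le 2(2d)^{18}$ classes $U_1,\dots,U_k$ that are $19$-separated, so the union is only over $x$ and $j$ (at most $e^{d^2}\cdot 2(2d)^{18}$ choices); pigeonhole gives $|U_j\cap Z|\ge d|T|/(123k)\ge d^4$ for some $j$; and then Observation~\ref{nunlikely} gives $\Pr_p\bigl(|U_j\cap Z|\ge d^4\bigr)\le 2q^{d^4/2}\le e^{-2d^4}$, which is comfortably small enough. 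Your proposed fallback of increasing the number of steps beyond $40$ is unnecessary once these two fixes are made; the factor of $d$ gained by passing from $T$ to $U\cap Z$, together with the binomial-tail bound replacing a crude union over subsets, already yields an exponent $d^4$ rather than $d^3$.
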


\begin{proof}
We shall prove the lemma using Lemma~\ref{F}, and the following claim.\\[-1ex]

\noindent\ul{Claim}: Let $S^{(0)} = Z$ and $m = d/80$. Then, in the $\textup{Boot}(d,40,m)$ process,
$$\Pr_p\big( | S^{(41)} \setminus S^{(40)} | \ge 1 \big) \; \le \; \exp \big( - d^4 \big).$$

\begin{proof}[Proof of claim]
Recall that $q = \sup_{y \in B'} \Pr_p(y \in Z) \le d^{-1000}$, and suppose that $x \in S^{(41)} \setminus S^{(40)}$. We start by showing that there exists a set $T \subset S^{(1)} \setminus S^{(0)}$, with $d(x,y) = 40$ for each $y \in T$, such that
$$|T| \; \ge \; \frac{m^{40}}{2^{40} 40!} \; \ge \; \frac{d^{40}}{10^{140}}.$$
Indeed, writing $\Gamma(x,j) := \{v \in B' : d(x,v) = j\}$ for each $j \in \N$, let
$$T_j = \Gamma(x,j) \cap S^{(41-j)} \setminus S^{(40-j)},$$ and observe that $|T_1| \ge m$, i.e., that $\Gamma(x)$ must contain at least $m$ vertices of $S^{(40)} \setminus S^{(39)}$. To see this, simply note that if $x \notin S^{(40)}$ then $|\Gamma(x) \cap S^{(39)}|$ is at most $d - m$, and if $x \in S^{(41)} \setminus S^{(40)}$ then $|\Gamma(x) \cap S^{(40)}|$ is at least $d$.

Now, in exactly the same way, for each vertex $y \in T_j$, $\Gamma(y)$ must contain at least $m$ vertices of $S^{(40-j)} \setminus S^{(39-j)}$. At least $m - j \ge m/2$ of these are in $\Gamma(x,j+1)$ (since $y$ has at most $j$ neighbours outside $\Gamma(x,j+1)$), and therefore also in $T_{j+1}$. Since each vertex at distance $j+1$ from $x$ has at most $j+1$ neighbours in $\Gamma(x,j)$, it follows that
$$|T_{j+1}| \; \ge \; \frac{m|T_j|}{2(j+1)}.$$
Thus we obtain the set $T = T_{40}$, as claimed.

Now, consider the set $U = \Gamma(T) \cap \Gamma(x,41)$, and partition $U$ into sets $U_1, \ldots, U_k$, where $k \le 2(2d)^{18}$, so that if $y,z \in U_j$ for some $j$ then $d(y,z) \ge 19$. (That we can do so follows from the simple fact that $\chi(G) \le \Delta(G) + 1$, see for example Lemmas 3.6 and 6.1 of \cite{Maj}.) Since $T \subset S^{(1)} \setminus S^{(0)}$, each vertex of $T$ has at least $d - 40m = d/2$ neighbours in $S^{(0)} = Z$. Also, since $T \subset \Gamma(x,40)$, each vertex of $T$ sends at most 40 edges outside $U$.

It follows that there are at least $(d/2 - 40)|T| \ge d|T|/3$ edges from $T$ to $U \cap Z$. Moreover, each vertex of $U$ sends at most 41 edges into $T$, and so $U$ contains at least $d|T| / 123$ vertices of $Z$. By the pigeonhole principle, for some set $U_j$ we have
$$|U_j \cap Z| \; \ge \; \frac{d|T| }{ 123k } \; \ge \; d^4$$
since $d \ge 10^{10}$.

But the events $\{(y \in Z) : y \in U_j\}$ are independent, by Observation~\ref{15ind}, and
$$|U_j|^2q \; \le \; (2d)^{80} \left( \frac{1}{d} \right)^{1000} \; \le \; 1,$$ so by Observation~\ref{nunlikely},
$$\Pr_p\big( |U_j \cap Z| \ge d^4 \big) \; \le \; 2q^{d^4 / 2} \; \le \; e^{-2d^4}.$$
Now, we have at most $(2n)^d \le e^{d^2}$ choices for the vertex $x$, and at most $m \le 2(2d)^{18}$ choices for the set $U_j$. Thus
$$\Pr_p\big( | S^{(41)} \setminus S^{(40)} | \ge 1 \big) \; \le \; \Big( e^{d^2} 2(2d)^{18} \Big) \Pr_p\big( |U_j \cap Z| \ge d^4 \big) \; \le \; e^{-d^4},$$
as claimed.
\end{proof}

Now, recall that the event $F$ has probability at most $e^{-d^4}$, by Lemma~\ref{F}, and assume that $F$ does not hold, so $Y \subset Z$. Thus the sites ever again in state `$-$' after time $d$ in the process $\P$ are a subset of $[Y] \subset [Z]$, the closure under the usual majority bootstrap rule. But if $S^{(41)} \setminus S^{(40)} = \emptyset$, then $[Z] \subset [Z]_{40}$, and it follows that $F'$ does not hold. Hence
$$\Pr_p(F') \; \le \; \Pr_p(F) \, + \, \Pr_p\big( S^{(41)} \setminus S^{(40)} \neq \emptyset \big) \; \le \; 2\exp \big( - d^4 \big)$$
by Lemma~\ref{F} and the claim, as required.
\end{proof}

We now bound the probability that a vertex is contained in $[Z]_{40}$.

\begin{lemma}\label{z30}
Let $x \in B'$, and suppose $\eps ^2 d \ge 10^{10}\log d$. Then
$$\Pr_p(x \in [Z]_{40}) \; \le \; \left( \frac{1}{d} \right)^{500}.$$ 
and the events $x \in [Z]_{40}$ are $120$-independent.
\end{lemma}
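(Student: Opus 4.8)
The plan is to establish the probability bound and the $120$-independence separately; both reduce to the soft observation that the event $\{x \in [Z]_{40}\}$ depends only on $Z$ restricted to a ball of bounded radius about $x$, combined with the bound $\Pr_p(y \in Z) < d^{-1000}$ from Lemma~\ref{qbound}. Throughout, write $B(y,r)$ for the set of vertices of the torus on $B'$ within graph-distance $r$ of $y$.

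For the probability bound, I would first record that in the $\textup{Boot}(d,40,m)$ process infection spreads by at most one step of graph-distance per time step, so by induction $S^{(t)} \subseteq \bigcup_{y \in S^{(0)}} B(y,t)$; in particular $[Z]_{40} = S^{(40)} \subseteq \bigcup_{y \in Z} B(y,40)$, and hence $x \in [Z]_{40}$ forces $Z \cap B(x,40) \neq \emptyset$. Since we only need a polynomial bound, a crude union bound suffices here (in contrast with Lemma~\ref{F'}, where the tree construction of~\cite{Maj} was needed to reach $e^{-d^4}$):
\[
\Pr_p\big(x \in [Z]_{40}\big) \;\le\; \Pr_p\big( Z \cap B(x,40) \neq \emptyset \big) \;\le\; |B(x,40)| \cdot \sup_{y \in B'}\Pr_p(y \in Z) \;<\; |B(x,40)| \cdot d^{-1000}.
\]
A point within $\ell_1$-distance $40$ of $x$ is specified by a word of length at most $40$ over the $2d$ unit steps $\pm e_i$, so $|B(x,40)| \le (2d)^{41} \le d^{60}$ for $d \ge 10^{10}$, and the right-hand side is at most $d^{-940} < d^{-500}$, as required.

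For the independence statement, I would track the region of initial data on which the event $E_x := \{x \in [Z]_{40}\}$ depends. Whether $y \in Z$ is determined by the clock $C(y)$ together with the initial spins on $B(y,9)$: the three conditions in the definition of $Z$ use, respectively, $C(y)$ alone; the spins on $\Gamma(y)$; and — since membership in $X = S^{(8)} \setminus A^-$ is decided by the $\textup{Boot}(d,8,m)$ process, which again spreads by at most one step per time step — the spins on $\bigcup_{z \in \Gamma(y)} B(z,8) \subseteq B(y,9)$. Combining this with the one-step-per-step bound for the $40$-step closure, $E_x$ depends only on the clocks in $B(x,40)$ and the initial spins in $B(x,49)$, and all of these data are mutually independent across vertices. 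Hence if $x_1,\dots,x_k \in B'$ are pairwise at distance at least $120$, the regions $B(x_i,49)$ are pairwise disjoint (as $120 > 2\cdot 49$), so $E_{x_1},\dots,E_{x_k}$ are functions of disjoint parts of an independent family and are therefore jointly independent; this is precisely the asserted $120$-independence.

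There is no serious obstacle. The only point requiring care is the bookkeeping of radii in the independence part: one must remember to add the radius $9$ built into the definition of $Z$ (in particular the contribution of the set $X$) to the radius $40$ of the closure operator before comparing with $120$. The wastefulness of the union bound in the first part is deliberate — it works precisely because the target $d^{-500}$ is only polynomially small, whereas Lemma~\ref{F'} genuinely required exhibiting $\sim d^4$ independent elements of $Z$.
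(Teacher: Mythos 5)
Your proposal is correct and takes essentially the same route as the paper: the probability bound comes from the observation that $x \in [Z]_{40}$ forces $Z \cap B(x,40) \ne \emptyset$ followed by a crude union bound, and the $120$-independence comes from tracking the radius of the region of clocks and initial spins on which $\{x \in [Z]_{40}\}$ depends. The only (inconsequential) difference is that your bookkeeping is slightly tighter — you correctly obtain a dependency radius of $49$, whereas the paper states $58$ — and in both cases twice the radius is comfortably below $120$, so the conclusion is the same.
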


\begin{proof}
If $x \in [Z]_{40}$, then there must exist an element of $Z$ within distance 40 of $x$. But the expected number of such elements is at most $2(2d)^{40}q$, and so
$$\Pr_p\big(x \in [Z]_{40}\big) \; \le \; 2d(2d)^{40}q \; \le \; \left( \frac{1}{d} \right)^{500}.$$
The event $x \in [Z]_{40}$ depends only on vertices within distance $58$ of $x$, so these events are 120-independent.
\end{proof}

Finally, we deduce the bound we require.

\begin{lemma}\label{inact}
Let $x \in B'$, and suppose $\eps^2 d \ge 10^{10}\log d$. Then
$$\Pr_p\big( \sigma(x) \textup{ is `$-$' at time }200d^5 + d\textup{ in }\P \big) \; \le \; 3\exp\big( -d^4 \big).$$
\end{lemma}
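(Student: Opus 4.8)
The plan is to use Lemma~\ref{F'} to confine the `$-$' phase after time $d$ to $[Z]_{40}$, to show that near $x$ the set $[Z]_{40}$ breaks up into tiny ``islands'', and to show that each such island, surrounded by `$+$', is erased well within time $200d^5$. So I would first condition on the event of probability at least $1-2e^{-d^4}$ underlying Lemma~\ref{F'}, namely that $F$ does not hold and $S^{(41)}=S^{(40)}$; on this event every vertex ever again in state `$-$' after time $d$ in $\P$ lies in $[Z]_{40}$, and $[Z]_{40}$ is closed under the $d$-neighbour bootstrap rule. In particular, if $x\notin[Z]_{40}$ then $\sigma(x)$ is `$+$' at every time $\ge d$, so it remains to bound the probability that $x\in[Z]_{40}$ and $\sigma(x)$ is still `$-$' at time $200d^5+d$; since Lemma~\ref{z30} gives only $\Pr_p(x\in[Z]_{40})\le d^{-500}$, the real content is a conditional recovery estimate.

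\emph{Structural step.} Recall that $[Z]_{40}$ lies within distance $40$ of $Z$, and that each connected component of $[Z]_{40}$ is itself closed under the $d$-neighbour rule (if $w\notin C$ had $d$ neighbours in a component $C\subseteq[Z]_{40}$ it would lie in $[Z]_{40}$, hence in $C$). Using the hierarchy/traceback method behind Lemma~\ref{tool} (Lemmas~6.2--6.3 of~\cite{Maj}), the $19$-independence of the events $\{y\in Z\}$ (Observation~\ref{15ind}), and $q\le d^{-1000}$, I would show that with probability at least $1-e^{-d^4}$ every component of $[Z]_{40}$ meeting $\{v:d(v,x)\le 200d^5\}$ has at most $\mathrm{poly}(d)$ vertices: a larger component would have to be internally spanned, under $d$-neighbour bootstrap, by a correspondingly large hierarchical sub-structure inside $Z$, and the super-exponential estimates of Lemma~\ref{tool} forbid this except with probability $\le e^{-d^4}$, even after a union bound over the $e^{O(d\log d)}$ possible positions of the component. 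Since a ball of radius $r$ in $\Z^d$ contains at least $(2d/r)^r$ vertices, a component with $\mathrm{poly}(d)$ vertices has bounded inradius, so each of its vertices lies within $O(1)$ steps of its (permanently `$+$') exterior.

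\emph{Dynamical step and conclusion.} Condition also on this structural event. Zero-temperature Glauber dynamics is attractive under the natural coupling of the exponential clocks and the tie-breaking coins, so I may couple $\P$, run from time $d$ on, with the dynamics $\tau$ started at time $d$ from the configuration that is `$-$' on $[Z]_{40}$ and `$+$' elsewhere; the true set of `$-$' vertices is then contained in that of $\tau$ at every later time. As $[Z]_{40}$ is $d$-neighbour-closed, no vertex outside it ever acquires the $d+1$ `$-$' neighbours needed to flip, so under $\tau$ the `$-$' phase remains inside $[Z]_{40}$; restricted to the component $C_x$ of $x$ it is a finite Markov chain whose unique absorbing state ``all $+$'', reachable from every state (e.g.\ by repeatedly flipping a lexicographically extremal `$-$' vertex, which always has at most $d$ `$-$' neighbours), is hit almost surely. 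Because $C_x$ is a $d$-neighbour-closed island of bounded inradius surrounded by `$+$', I claim it is erased within time $200d^5$ except with probability $\le e^{-d^4}$; adding this to the $\le e^{-200d^5}$ probability that some clock in a fixed bounded neighbourhood of $x$ fails to ring, the $2e^{-d^4}$ from Lemma~\ref{F'}, and the $e^{-d^4}$ from the structural step yields the asserted $3e^{-d^4}$.

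The substantive obstacle is the erosion claim. Majority dynamics is \emph{not} monotone in time --- a `$+$' vertex of $C_x$ with $\ge d+1$ `$-$' neighbours can flip back to `$-$' --- and the $2d$-interface of a bootstrap-closed set need not be smooth, so one cannot merely wait for an outermost layer to ring. I expect this to be handled by a layer-by-layer argument, or a suitable potential on the `$-$' region of $\tau$ inside $C_x$, using both the $d$-closedness of $C_x$ and its bounded inradius to show that the `$-$' phase recedes fast enough to clear $C_x$ in $\mathrm{poly}(d)$ time with failure probability $e^{-d^4}$; by contrast the structural step is a routine adaptation of the machinery already developed for Lemmas~\ref{F} and~\ref{F'}.
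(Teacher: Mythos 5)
Your overall framing --- condition on the event behind Lemma~\ref{F'} so that all `$-$' vertices after time $d$ lie in $[Z]_{40}$, then show $x$ recovers by time $200d^5+d$ --- is sensible, but the route you take is genuinely different from the paper's, and as you yourself flag it leaves the central quantitative step unproved.

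The paper never decomposes $[Z]_{40}$ into components and never argues that finite $d$-neighbour-closed droplets are eroded. Instead it traces back \emph{through time}: let $E$ be the event that some clock within distance $200$ of $x$ is silent throughout some length-$d^5$ interval before time $200d^5+d$ (so $\Pr_p(E)\le e^{-2d^4}$). On $E^c\cap (F')^c$, if $\sigma(x)$ is `$-$' at time $200d^5+d$ then at its last ring in $[199d^5+d,\,200d^5+d)$ it had $\ge d$ `$-$' neighbours; each of these, at its own last ring in the preceding $d^5$-window, had $\ge d$ `$-$' neighbours of which $\ge d-1$ lie in $\Gamma(x,2)$; iterating to distance $200$ produces sets $R(k)\subset\Gamma(x,k)$, each consisting of vertices that were `$-$' at some time $\ge d$ (hence in $[Z]_{40}$), with $|R(k)|\ge d^k/(2^k\,k!)$. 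Having $|R(200)|\ge d^{200}/(2^{200}\,200!)$ vertices of $[Z]_{40}$ on the sphere $\Gamma(x,200)$ is then ruled out by partitioning into $\le 2(2d)^{119}$ classes at mutual distance $\ge 120$, applying the $120$-independence and the bound $\Pr_p(y\in[Z]_{40})\le d^{-500}$ of Lemma~\ref{z30} together with Observation~\ref{nunlikely}, and a union bound over the choice of class. No droplet-erosion estimate is needed anywhere.

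Your proposal, by contrast, rests on two unproved claims. The structural one --- that every component of $[Z]_{40}$ near $x$ has $\mathrm{poly}(d)$ size --- is not a direct consequence of Lemma~\ref{tool}, which bounds the per-vertex probability $\Pr(y\in S^{(k+1)}\setminus S^{(k)})$ rather than cluster sizes; one would need a separate chaining argument over $Z$ (using that every point of $[Z]_{40}$ is within distance $40$ of $Z$, so a long component forces a long chain of $Z$-points at scale $\approx 80$, unlikely when $q\le d^{-1000}$). This is plausible but not as routine as you suggest. The serious gap, which you acknowledge, is the dynamical claim that a bounded-inradius, $d$-neighbour-closed `$-$' island surrounded by permanent `$+$' is erased within time $200d^5$ except with probability $e^{-d^4}$. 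Your observation that the lexicographically maximal `$-$' vertex has at most $d$ `$-$' neighbours only gives reachability of the all-`$+$' state; because the chain is not monotone (a `$+$' vertex of $C_x$ with $\ge d+1$ `$-$' neighbours can re-flip), it yields no quantitative hitting-time bound, and you offer no potential function or layer argument that would. Since this is the load-bearing step of your approach, the proposal as written is incomplete; the paper's trace-back argument is exactly the device that avoids having to prove any erosion statement at all.
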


\begin{proof}
Let $T = d^5$, and suppose that $\sigma(x)$ is `$-$' at time $200T + d$. Let $E$ denote the event that, at some point before time $200T + d$, a time interval of length $T$ passes in which the clock of some vertex within distance 200 of $x$ does not ring. There are at most $2(2d)^{200}$ such vertices, and if such an interval occurs then it contains an interval of the form $[Tj/2,T(j+1)/2]$. There are $400$ such intervals, and the probability that a given clock does not ring in one of them is $\exp( -T/2 )$. Hence,
$$\Pr_p(E) \; \le \; 800 (2d)^{200} \exp\left( - \frac{d^5}{2} \right) \; \le \; e^{-2d^4}.$$
For the rest of the proof, assume that $E$ does not occur. Assume also that $F'$ does not hold, so if $\sigma(y)$ is `$-$' at some time $t \ge d$, then it follows that $y \in [Z]_{40}$.

Since $E$ does not occur, the clock of $x$ rings at some point in the interval $[199T+d,200T+d)$. Let $t(x)$ denote the last time this happens before $200T+d$, and observe that, since $\sigma(x)$ is `$-$' at time $200T + d$, $x$ must have a set $R(1)$ of at least $d$ neighbours in state `$-$' at time $t(x) \ge 199T + d$. Similarly, each clock associated with a vertex of $R(1)$ rings at some point in the interval $[t(x)-T,t(x))$. For each vertex $y$, let $t(y)$ denote the last time this happens, and observe that at time $t(y)$ vertex $y$ has at least $d$ neighbours in state `$-$', of which at least $d-1 \ge d/2$ are at distance two from $x$ (since it has only one neighbour outside $\Gamma(x,2)$). Each vertex in $\Gamma(x,2)$ has only two neighbours in $\Gamma(x,1)$, and so there is a set $R(2) \subset \Gamma(x,2)$ of at least $|R(1)|d/4$ vertices, which are each in state `$-$' at some time after $198T + d$.

In general, for each $1 \le j \le 199$ and each vertex $z \in R(j) \subset \Gamma(x,j)$, there exists a time $t(z) \ge (200 - j - 1)T + d$ at which the clock of vertex $z$ rings, and $z$ has at least $d$ neighbours in state `$-$', of which at least $d-j \ge d/2$ are at distance $j+1$ from $x$ (since $z$ has only $j$ neighbours outside $\Gamma(x,j+1)$). Each vertex in $\Gamma(x,j+1)$ has at most $j+1$ neighbours in $\Gamma(x,j)$, and so there is a set $R(j+1) \subset \Gamma(x,j+1)$ of at least
$$\frac{|R(j)|d}{2(j+1)}$$
vertices, which are each in state `$-$' at some time after $(200 - j - 1)T + d$.

From this process (see also the proof of Lemma~\ref{F'}), we obtain sets $R(k) \subset \Gamma(x,k)$ for each $k \in [200]$, such that for each vertex $y \in R(k)$, $\sigma(y)$ is `$-$' at some time $t \ge (200 - k)T + d$. Moreover, we have
$$|R(k)| \; \ge \; \frac{d^k}{2^k \,k!}$$ for each $k \in [200]$. Finally, note that each vertex of $R(k)$ is in state `$-$' at some time after $d$, so must also be in $[Z]_{40}$.

Now, let $U = R(200)$, and partition $U$ into sets $U_1, \ldots, U_m$, where $m \le 2(2d)^{119}$, so that if $y,z \in U_j$ for some $j \in [m]$, then $d(y,z) \ge 120$ in the torus on $B'$. Observe that, by the pigeonhole principle, some set $U_j$ contains at least 
$$\frac{|R(200)|}{m} \; \ge \; \left( \frac{d^{200}}{2^{200} 200!} \right) \left( \frac{1}{2(2d)^{119}} \right) \; \ge \; d^{20}$$ vertices of $[Z]_{40}$, since $d \ge 10^{10}$.

But the events $\{(y \in [Z]_{40}) : y \in U_j\}$ are independent, and, by Lemma~\ref{z30},
$$|U_j|^2 \Pr_p(y \in [Z]_{40}) \; \le \; (2d)^{400} \left( \frac{1}{d} \right)^{500} \; \le \; 1$$
for every $y \in B'$. Thus, by Observation~\ref{nunlikely},
$$\Pr_p\big( |U_j \cap [Z]_{40}| \ge d^{20} \big) \; \le \; 2 \Pr_p\big(x \in [Z]_{40}\big)^{d^{20}/2} \; \le \; e^{-3d^4}.$$
Finally, we have at most $m \le 2(2d)^{119}$ choices for the set $U_j$. Thus
\begin{eqnarray*}
\Pr_p\big( \sigma(x) \textup{ is `$-$' at time }200d^5 + d\textup{ in }\P \big) & \le & \Pr_p(E) \, + \, \Pr_p(F') \, + \, 2(2d)^{119} e^{-3d^4} \\
& \le & 3\exp\big( - d^4 \big),
\end{eqnarray*}
by Lemma~\ref{F'}, as required.
\end{proof}

\section{The proof of Theorem~\ref{to1/2}}\label{proofsec}

In this section we shall put together the pieces and prove Theorem~\ref{to1/2}. We have shown that, in the process $\P$, for any vertex $x \in B'$,
$$\Pr_p\big( \sigma(x)\textup{ is `$-$' at time }200d^5 + d \big) \; \le \; 3 \exp\big( - d^4 \big).$$
Thus the probability that there exists a vertex in $B'$ in state `$-$' at this time is at most $\exp( - d^4/2 )$, since $B'$ has $(n')^d \le e^{d^2}$ vertices. However, this is in the process $\P$, not the original Glauber dynamics. We therefore need one more lemma. (See also Step 3 of the proof of Lemma 4.1 in~\cite{FSS}, on which the following lemma is based.)

Define a \emph{path of clock-rings} to be a sequence $(x_1,t_1), \ldots, (x_m,t_m)$ of vertex-time pairs, where $x_j \in \Z^d$ and $t_j \in [0,\infty)$, such that the following conditions hold:
\begin{itemize}
\item $\|x_{j+1} - x_j\|_1 = 1$ for each $j \in [m-1]$.\\[-2ex]
\item $t_1 < \dots < t_m$.\\[-2ex]
\item The clock of vertex $x_j$ rings at time $t_j$ for each $j \in [m]$.
\end{itemize}
We say moreover that such a sequence is a path from $x_1$ to $x_m$ in time $[t_1,t_m]$. We begin with a simple but key observation.

\begin{obs}\label{path}
Let $x,y \in \Z^d$ and $t \in [0,\infty)$. Suppose that there does not exist a path of clock-rings from $x$ to $y$ in time $[0,t]$. Then the state of vertex $y$ at time $t$ is independent of the state of vertex $x$ at time $0$.
\end{obs}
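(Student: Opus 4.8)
The plan is to exploit the fact that, in Glauber dynamics, information about the state of a vertex can only propagate along edges, and only at the moments when clocks ring. More precisely, if no path of clock-rings connects $x$ to $y$ during $[0,t]$, then the trajectory $(\sigma(y,s) : s \in [0,t])$ can be computed from data that does not involve $\sigma(x,0)$, and hence is independent of it.

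First I would make precise the set of ``data'' on which the state of $y$ at time $t$ depends. Fix a realisation of all the clocks, i.e.\ the (countable) set of vertex-time pairs $(z,s)$ at which clock $C(z)$ rings during $[0,t]$, together with the auxiliary coin flips used to break ties. Define the \emph{space-time cone} $K(y,t)$ to be the set of pairs $(z,s)$, $s \in [0,t]$, such that there is a path of clock-rings from $z$ to $y$ lying in time $[s,t]$ (allowing the degenerate one-point path, so $(y,s) \in K(y,t)$ for all $s$). I would then prove, by induction on the (finite, a.s.) number of clock-rings in $[0,t]$ that lie in $K(y,t)$ ordered by time, the claim: \emph{the state $\sigma(y,t)$ is a deterministic function of the clock-ring data, the tie-break coins, and the initial states $\{\sigma(z,0) : (z,0) \in K(y,t)\}$}. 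The base case (no clock-ring in the cone) gives $\sigma(y,t) = \sigma(y,0)$, and $(y,0) \in K(y,t)$, so it holds trivially. For the inductive step, let $(z,s)$ be the last clock-ring in $K(y,t)$; at time $s$ the new state of $z$ is determined by the tie-break coin at $(z,s)$ and by the states $\sigma(w,s^-)$ of its neighbours $w$, each of which, together with $\sigma(z,s^-)$, is governed by a strictly smaller cone $K(w,s)$ (or $K(z,s)$) that is contained in $K(y,t)$; apply the inductive hypothesis to each of these, and note that $\sigma(y,\cdot)$ is constant on $[s,t]$ since there is no later ring in the cone. This establishes the claim.

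Now observe that, by hypothesis, $(x,0) \notin K(y,t)$ (a path from $x$ to $y$ in $[0,t]$ is exactly a path of clock-rings, and there is none). The clocks and tie-break coins are chosen independently of all initial states, and the initial states $\{\sigma(z,0)\}_z$ are mutually independent. Hence $\sigma(y,t)$, being a function of the clocks, the coins, and $\{\sigma(z,0) : (z,0) \in K(y,t)\}$ — a family that excludes $\sigma(x,0)$ — is independent of $\sigma(x,0)$, which is the assertion.

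I expect the only real subtlety to be the measurability/finiteness bookkeeping: one must know that, almost surely, only finitely many clock-rings in $[0,t]$ lie in the relevant cone, so that the induction is well-founded and $\sigma(y,t)$ is genuinely well-defined. This is standard for continuous-time interacting particle systems with bounded degree (the relevant graphical-construction argument of Harris): a.s.\ no vertex rings more than finitely often in $[0,t]$, and a first-moment / percolation estimate shows that the space-time region influencing $y$ by time $t$ touches only finitely many vertices. I would either cite this or include a one-line union-bound sketch; everything else is the deterministic induction above.
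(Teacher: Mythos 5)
The paper offers no proof of Observation~\ref{path}: it is stated as a ``simple but key observation'' and left to the reader, so there is no textual argument to compare against. Your approach via the Harris graphical construction and a backward-in-time dependence cone is the standard and correct way to make the observation rigorous, and your closing remark about a.s.\ finiteness of the relevant space-time region is exactly the right thing to worry about to make the induction well-founded.

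There is, however, an off-by-one slip in the induction, one that is in fact inherited from the paper's own loosely worded definition of a path of clock-rings. In your inductive step you claim that, for a neighbour $w$ of the last-ringing vertex $z$, the cone $K(w,s)$ is contained in $K(y,t)$. This fails whenever $w$'s clock is silent on $[0,s]$: your degenerate-path convention puts $(w,0)$ into $K(w,s)$, but a path of clock-rings from $w$ to $y$ in $[0,t]$ must, by the paper's definition, \emph{begin} with a ring of $w$'s clock, so $(w,0)\notin K(y,t)$. Yet $\sigma(w,0)$ genuinely influences $\sigma(z,s)$ and hence $\sigma(y,t)$. The same issue makes the observation false as literally stated: take $x$ adjacent to $y$ with $x$'s clock silent on $[0,t]$ and $y$'s clock ringing once. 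The fix is to widen the cone by one graph step --- include $(z,0)$ whenever $z$ equals, or is adjacent to, the first vertex of a path of clock-rings ending at $y$ --- and correspondingly strengthen the hypothesis of the observation to ``no path of clock-rings from any \emph{neighbour} of $x$ to $y$ in $[0,t]$, and $x\neq y$''. With that adjustment your induction closes. In the paper's only use of the observation (Lemma~\ref{F''}), $x$ lies outside $B'$ and $y$ inside $B$, at graph distance at least $2^d$, so the $\pm1$ is immaterial there; but in a self-contained proof of the observation it needs to be addressed.
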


Let $F''$ denote the event that there exists a path of clock-rings from some vertex outside $B'$ to some vertex inside $B$ in time $[0,T]$, where $T = 200d^5 + d$. Note that, by Observation~\ref{path}, if $F''$ does not occur, then the state of every vertex in $B$ at time $T$ is the same in Glauber dynamics on $\Z^d$ as it is in the process $\P$, since the boundary conditions cannot affect $B$.

\begin{lemma}\label{F''}
$\Pr_p(F'') \, \le \, 2^{-2^d}.$
\end{lemma}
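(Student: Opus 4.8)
The plan is to bound $\Pr_p(F'')$ by a union bound over all possible paths of clock-rings that could realise the event $F''$. Since $B$ and $B'$ are concentric blocks of side lengths $n = 3 \times 2^d$ and $n' = 5 \times 2^d$ respectively, any path from a vertex outside $B'$ to a vertex inside $B$ must have $\|\cdot\|_1$-length at least $2^d$ (the $\ell_\infty$-gap between $\partial B$ and $\partial B'$ is $(n'-n)/2 = 2^d$, and $\ell_1 \ge \ell_\infty$). So $F''$ is contained in the event that, for \emph{some} vertex $x$ on the boundary of $B'$, there is a path of clock-rings starting at $x$ of length at least $\ell := 2^d$ within time $[0,T]$, $T = 200d^5 + d$.

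First I would fix a starting vertex $x$ and count the weighted paths. A path of clock-rings of length $\ell$ starting at $x$ is specified by a lattice walk of $\ell$ steps (at most $(2d)^\ell$ choices of step-sequence) together with an increasing sequence of ring-times $t_1 < \dots < t_\ell$ in $[0,T]$, one at each successive vertex. The probability that a given vertex's clock rings during the whole interval $[0,T]$ at least once, and more to the point the probability that a prescribed walk admits \emph{some} valid increasing sequence of ring times, is controlled by the standard Poisson-process estimate: the number of rings of a fixed clock in $[0,T]$ is Poisson with mean $T$, and the probability that a length-$\ell$ chain of successive rings along the walk can be formed in time $[0,T]$ is at most $T^\ell/\ell!$ (this is exactly the estimate used in Step 3 of Lemma 4.1 of~\cite{FSS}). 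Hence, for a fixed $x$,
$$\Pr_p\big(\exists\text{ path of clock-rings from }x\text{ of length }\ell\text{ in }[0,T]\big) \;\le\; \frac{(2dT)^\ell}{\ell!}.$$
Summing over the at most $(n')^d \le e^{d^2}$ choices of $x$ (in fact only the boundary vertices matter, but the crude bound suffices) and using $\ell = 2^d$ together with $\ell! \ge (\ell/e)^\ell$, we get
$$\Pr_p(F'') \;\le\; e^{d^2} \left( \frac{2edT}{2^d} \right)^{2^d}.$$

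Finally I would check that this is smaller than $2^{-2^d}$. Since $T = 200d^5 + d \le d^6$ for $d$ large, the bracket is at most $2e d^7 / 2^d$, which is less than $2^{-2}$ once $2^d \ge 8ed^7$, i.e. for all large $d$ (and certainly for $d$ in the regime $\eps^2 d \ge 10^{10}\log d$ that we care about, where $d$ is enormous). Thus $(2edT/2^d)^{2^d} \le 4^{-2^d} = 2^{-2^{d+1}}$, and the prefactor $e^{d^2} = 2^{d^2/\ln 2}$ is absorbed since $d^2/\ln 2 \ll 2^{d}$ for large $d$, giving $\Pr_p(F'') \le 2^{-2^d}$ with room to spare. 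I do not expect any real obstacle here: the argument is a routine first-moment computation over paths, and the whole point of the construction is that the geometric separation $2^d$ between $B$ and $B'$ is exponentially large in $d$ while the time $T$ is only polynomial, so the path count is crushed. The only mild care needed is the Poisson/ordering estimate $T^\ell/\ell!$ for the probability that a prescribed walk carries a valid increasing chain of clock-rings, which is standard and identical to the corresponding step in~\cite{FSS}.
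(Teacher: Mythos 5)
Your proposal is correct, and its key estimate differs from the paper's in a small but genuine way. To bound the probability that a fixed lattice walk of length $\ell$ admits a valid increasing chain of clock-rings in $[0,T]$, you compute the \emph{expected number} of such increasing $\ell$-tuples of rings — which by the Mecke/Campbell formula for Poisson processes is exactly $T^\ell/\ell!$ (and this calculation goes through even if the walk revisits vertices, since the constraint $t_1<\dots<t_\ell$ forces the points to be distinct) — and then apply Markov's inequality. The paper instead takes the greedy choice $t_j = $ first ring of $C(x_j)$ after $t_{j-1}$, notes that the increments $t_j-t_{j-1}$ are i.i.d.\ $\mathrm{Exp}(1)$, observes that $t_r\le T$ forces at least $r/2$ of the increments to be $\le 2T/r$, and gets $\binom{r}{r/2}(2T/r)^{r/2}\le (8T/r)^{r/2}$ by a binomial tail bound. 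Your $T^\ell/\ell! \approx (eT/\ell)^\ell$ is tighter than the paper's $(8T/r)^{r/2}$ when $T/\ell$ is small (as it is here: $T$ is polynomial in $d$, $\ell=2^d$), and it sidesteps the strong-Markov bookkeeping for self-intersecting walks; the paper's bound is looser but still massively sufficient. A second, cosmetic difference: you truncate to paths of length \emph{exactly} $\ell=2^d$ (any longer path contains such a prefix), whereas the paper sums $\sum_{r\ge 2^d}(2n)^d(2d)^r(8T/r)^{r/2}$ over all lengths. Both lead comfortably to $\Pr_p(F'')\le 2^{-2^d}$, and your arithmetic (the bracket $2edT/2^d\le 1/4$ and absorbing the prefactor $e^{d^2}$) is in order, given that the hypothesis $\eps^2 d\ge 10^{10}\log d$ forces $d$ to be enormous.
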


\begin{proof}
For each $r \in \N$, there are at most $(2n)^d(2d)^r$ paths of length $r$ starting on the boundary of $B'$. Given a time $T \in [0,\infty)$, let $P(r,T)$ denote the probability that a particular path of length $r$, $(x_1, \ldots, x_r)$ say, can be extended to a path of clock-rings in time $[0,T]$. In other words, $P(r,T)$ is the probability that there exist times $0 \le t_1 < \dots < t_r \le T$ such that $(x_1,t_1), \ldots, (x_r,t_r)$ is a path of clock-rings. It is clear that $P(r,T)$ does not depend on the particular path we choose.

We bound $P(r,T)$ as follows. For each $j \in [r]$ choose $t_j$ to be the first time the clock $C(x_j)$ rings after time $t_{j-1}$. Let $J_k$ denote the event that $t_k - t_{k-1} \le 2T/r$, and observe that
$$\Pr_p(J_k) \; = \; 1 \, - \, \exp\left( -\frac{2T}{r} \right) \; \le \; \frac{2T}{r},$$
and that the events $J_k$ are independent. Let $J = \sum_{k=1}^r I[J_k]$, where $I$ denotes the indicator function. Then,
$$P(r,T) \; = \; \Pr_p\big( t_r \le T \big) \; \le \; \Pr_p\left(J \ge \frac{r}{2} \right) \; \le \; {r \choose r/2} \left( \frac{2T}{r} \right)^{r/2} \; \le \; \left( \frac{8T}{r} \right)^{r/2}.$$

Now, applying this with $r \ge 2^d$ and $T = 200d^5 + d$, we obtain
$$\Pr_p(F'') \; \le \; \sum_{r = 2^d}^\infty (2n)^d(2d)^r \left( \frac{8T}{r} \right)^{r/2} \; \le \; 2^{-2^d}$$ as required.
\end{proof}

Finally, we are ready to prove Theorem~\ref{to1/2}.

\begin{proof}[Proof of Theorem~\ref{to1/2}]
Let $\eps > 0$ and let $p = \ds\frac{1}{2} + \eps$. Let $d \in \N$ satisfy $\eps^2 d \ge 10^{10}\log d$, and choose the elements of the set $A^+ \subset \Z^d$ independently at random, each with probability $p$. Let $n = 3 \times 2^d$, and partition $\Z^d$ into blocks of size $[n]^d$, in the obvious way.

We run Glauber dynamics for time $T = 200d^5 + d$, and then stop. Given a block $B$, define the block $B' \supset B$, and the process $\P$ on $B'$, as in Section~\ref{firstd}. We say that $B$ is a good block if both of the following events occur in $B'$:
\begin{itemize}
\item The event $F''$ does not occur.\\[-2ex]
\item All of the elements of $B$ are in state `$+$' at time $T$ in the process $\P$.
\end{itemize}
Otherwise we say that $B$ is a bad block.

Note that if $B$ is good, then all the elements of $B$ are in state `$+$' at time $T$ in Glauber dynamics, by the comment after Observation~\ref{path}. Also, by Lemmas~\ref{inact} and \ref{F''}, the probability that $B$ is bad is at most
$$\Pr_p(F'') \, + \, \sum_{x \in B} \Pr_p\big( \sigma(x) \textup{ is `$-$' at time } T \textup{ in }\P \big) \; \le \; 2^{-2^d} \, + \, 3n^d \exp\big( -d^4 \big) \; \le \; \exp \left( - \frac{d^4}{2} \right).$$
Moreover, the event ``$B$ is good" depends only on what happens inside $B'$. Hence, given any collection of blocks $B_1,\ldots,B_k$ with $\|B_i - B_j\|_\infty \ge 2$ for each $i \neq j$, the events ``$B_j$ is good" are independent, since the corresponding blocks $B_j'$ are all disjoint.

Hence we may couple the dynamics at time $T$ with a distribution $\sigma \in \Omega(n,p)$, where $p = \Pr_p(B\textup{ is good})$. But
$$\Pr_p\big( B \textup{ is good}\big) \; \ge \; 1 \,-\, \exp\left( - \frac{d^4}{2} \right) \; > \; 1 \, - \, \eps' \left( \frac{1}{2n} \right)^{d^2} \; \ge \;  p^{(n)}_c(\Z^d),$$
by Theorem~\ref{FSSthm}, and so the system fixates at `$+$' with probability $1$, as required.
\end{proof}

\section{Acknowledgements}

The author would like to thank Vladas Sidoravicius for suggesting the problem to him, for reading an early version of the manuscript, and for several stimulating discussions. He would also like to thank the anonymous referee for a very careful reading of the proof, and for many useful comments.

\end{document}